\newtheorem{theorem}{Theorem}
\newtheorem{lemma}{Lemma}
\newtheorem{corollary}{Corollary}
\newtheorem{proposition}{Proposition}
\newtheorem{observation}{Observation}
\crefname{observation}{Observation}{Observations}
\newcommand{\symdif}{\mathbin{\triangle}}
\newcommand{\True}{{\tt true}}
\newcommand{\False}{{\tt false}}
\newcommand{\sat}{{\tt \#Sat}}
\newcommand{\naesat}{{\tt \#NAESat}}
\title{Finding One Local Optimum Is Easy---but What About Two?\thanks{
This work is partially supported by JSPS KAKENHI Grant Numbers
JP23K28034, % yk kibanB, 
JP24H00686, % yk kibanA (ito), 
JP24H00697, % yk kibanA (tamaki)
JP22H03549, % kk, kiban B (Horiyama)
JP25K21273, % kk, wakate
JP25K03080, % kk, kiban B(Wasa)
JP25K00136, % kk, kiban B(Nakashima)
JP20K19743, %yy wakate
JP20H00605, and %yy kibanA-1
JP25H01114, %yy kibanA-2
and by JST CRONOS Japan Grant Number JPMJCS24K2. %yy CRONOS
}}
\author{
Yasuaki Kobayashi\thanks{Hokkaido University. Email: \texttt{koba@ist.hokudai.ac.jp}}
\and
Kazuhiro Kurita\thanks{Nagoya University. Email: \texttt{kurita@i.nagoya.ac.jp}}
\and
Yutaro Yamaguchi\thanks{Osaka University. Email: 
\texttt{yutaro.yamaguchi@ist.osaka-u.ac.jp}}
}
\begin{document}

\maketitle

\begin{abstract}
    The class PLS (Polynomial Local Search) captures the complexity of finding a solution that is locally optimal and has proven to be an important concept in the theory of local search. 
    It has been shown that local search versions of various combinatorial optimization problems, such as \textsc{Maximum Independent Set} and \textsc{Max Cut}, are complete for this class. 
    Such computational intractability typically arises in local search problems allowing \emph{arbitrary weights}; in contrast, for \emph{unweighted} problems, locally optimal solutions can be found in polynomial time under standard settings. 
    In this paper, we pursue the complexity of local search problems from a different angle: We show that computing two locally optimal solutions is NP-hard for various natural unweighted local search problems, including \textsc{Maximum Independent Set}, \textsc{Minimum Dominating Set}, \textsc{Max SAT}, and \textsc{Max Cut}. 
    We also discuss several tractable cases for finding two (or more) local optimal solutions.
\end{abstract}

\section{Introduction}
\emph{Local search} is one of the most popular heuristics for solving hard combinatorial optimization problems, which is frequently used in both theory and practice~\cite{aarts2003local,2018heuristics,GK2003,MichielsAK07,MonienDT10,WS11}, as well as in numerous studies in the field of Artificial Intelligence~\cite{GarvardtGKM23,GaspersKOSS12,SelmanKC94,SelmanLM92,SunWWL0Y24,ZhangL05}.
A primitive implementation of local search methods is the following hill-climbing algorithm: Starting from an arbitrary initial (feasible) solution $X$, the algorithm iteratively replaces the current solution $X$ with a (strictly) improved solution that can be found in a local solution space $\mathcal N(X)$, called a \emph{neighborhood}, defined around the current solution~$X$, as long as such an improvement can be found.
A solution that cannot be further improved with this procedure is called a \emph{local optimal solution}.
A plausible reason local search methods are frequently used is that finding a local optimal solution seems much easier than finding a \emph{global optimal solution} since every global optimal solution is locally optimal by definition.
In fact, however, there are several known obstacles to finding a local optimal solution.

Even if a local improvement in a neighborhood can be found in polynomial time, it is not easy to find a local optimal solution in general.
For example, the well-known $k$-opt heuristics require exponentially many improvement steps to find a local optimal solution in \textsc{Traveling Salesperson Problem} for all $k \ge 2$~\cite{ChandraKT99,EnglertRV14}.
A similar phenomenon can be observed in several combinatorial problems with imposed neighborhood structures, such as \textsc{Weighted Max Cut} with FLIP neighborhood~\cite{SchafferY91}.
The complexity of finding a local optimal solution under this setting is captured by the class PLS, which is introduced by~\cite{JohnsonPY88}, and various local search problems (with particular neighborhood structures) are shown to be complete in this class~\cite{KomusiewiczM24,Krentel89,SchafferY91}.
In particular, Sch\"affer and Yannakakis~\cite{SchafferY91} showed that the problem of finding a stable cut in edge-weighted graphs is PLS-complete, where a cut $\{X, Y\}$ in an edge-weighted graph $G$ is \emph{stable} if $w(X, Y) \ge w(X \symdif \{v\}, Y \symdif \{v\})$ for all $v \in X \cup Y$, where $w(X, Y)$ denotes the total weight of edges between $X$ and $Y$.\footnote{Here, $\symdif$ denotes the symmetric difference of two sets.}
In other words, a cut is stable if it is locally optimal under FLIP neighborhood~\cite{SchafferY91}.
Komusiewicz and Morawietz recently showed that the problem of finding a locally optimal weighted independent set under 3-swap neighborhood is PLS-complete~\cite{KomusiewiczM24}, where an independent set $Y$ is a \emph{3-swap neighbor} of an independent $X$ if $Y$ is obtained from $X$ by exchanging at most three vertices, that is, $|X \symdif Y| \le 3$.
We would like to mention that both problems are polynomial-time solvable in \emph{unweighted} or \emph{polynomially weighted}\footnote{A polynomially weighted graph is a vertex- or edge-weighted graph such that all the weights are positive integers bounded above by a polynomial in the size of the graph.} graphs as local optima can be obtained with a polynomial number of improvement steps from an arbitrary initial solution. 

Another obstacle is the complexity of finding a local improvement in a large-scale neighborhood.
This is frequently discussed through the lens of parameterized complexity theory~\cite{BergBJW21,FellowsFLRSV12,GarvardtGKM23,GaspersKOSS12,GuoHNS13,KomusiewiczM22,Marx08,MarxS11,Szeider11}.
In this context, we are given an instance and its feasible solution $X$ of a combinatorial problem and asked to find a better feasible solution $Y$ in a parameterized neighborhood $\mathcal N_k(X)$.
For the parameterized local search version of \textsc{Independent Set}, given a graph $G$, an independent set $X$ of $G$, and an integer $k$, the neighborhood $\mathcal N_k(X)$ of $X$ is defined as the collection of independent sets $Y$ of $G$ with $|X \symdif Y| \le k$, and hence the goal is to find an independent set $Y \in \mathcal N_k(X)$ with $|Y| > |X|$.
This problem is known to be W[1]-hard when parameterized by $k$~\cite{FellowsFLRSV12}, meaning that it is unlikely to exist an $f(k)|V(G)|^{O(1)}$-time algorithm.
Similarly to this, the local search version of \textsc{Max Cut} is also intractable~\cite{GarvardtGKM23}.

These obstacles can be immediately circumvented as long as we seek to find a local optimal solution for \emph{unweighted} or \emph{polynomially weighted} combinatorial optimization problems under \emph{polynomial-time computable neighborhood}, where a neighborhood function $\mathcal N$ is polynomial-time computable if given a solution $X$ of an instance $I$, its neighborhood $\mathcal N(X)$ can be computed in polynomial time in $|I|$.
In this paper, we explore the complexity of local search from yet another perspective, particularly focusing on \textsc{(Unweighted) Independent Set}, \textsc{Max Cut} on unweighted multigraphs, and \textsc{Max SAT} on unweighted CNF formulas.
An independent set $S$ of a graph $G$ is said to be \emph{$2$-maximal} if it is (inclusion-wise) maximal and, for every $v \in S$ and distinct $u, w \notin S$, $(S \setminus \{v\}) \cup \{u, w\}$ is not an independent set of $G$, that is, it is locally optimal under 3-swap neighborhood.
For a CNF formula, a truth assignment $\alpha$ is said to be \emph{unflippable} if the number of satisfied clauses under $\alpha$ is not smaller than the number of satisfied clauses under the truth assignment obtained from $\alpha$ by flipping the assignment of each single variable.
As mentioned above, we can find a 2-maximal independent set of $G$, a stable cut of an unweighted graph, and an unflippable truth assignment for a CNF formula in polynomial time by straightforward hill-climbing algorithms.

This paper establishes several complexity results for finding \emph{multiple} local optimal solutions in unweighted combinatorial optimization problems.
We show that it is NP-complete to decide whether an input graph has at least two $k$-maximal independent sets for any fixed $k \ge 2$.
It is worth mentioning that the problem of enumerating all maximal independent sets in a graph can be done in polynomial delay\footnote{An enumeration algorithm runs in polynomial delay if the time elapsed between two consecutive events is bounded by a polynomial in the input size, where the events include the initiation and the termination of the algorithm, and the output of each solution.}~\cite{TsukiyamaIAS77,JohnsonP88}, meaning that our result cannot be extended to the case $k = 1$.
This hardness result immediately yields similar hardness results for $k$-maximal cliques and $k$-minimal vertex covers as well.
We also show that it is NP-complete to decide whether an input graph has at least two $k$-minimal dominating sets and $k$-minimal feedback vertex sets.
For \textsc{Unweighted Max Cut} and \textsc{Unweighted Max SAT}, we show that the problems of finding two local optima (i.e., stable cuts and unflippable truth assignments, respectively) are NP-hard.

Our techniques used in this paper are not particularly new: We use standard polynomial-time reductions from known NP-complete problems.
However, a more careful construction in the reduction is required.
In a standard reduction for proving the NP-hardness of a problem $P$, given an instance $I$ of an NP-hard problem, we construct an instance $I'$ of $P$ such that $I$ has a solution if and only if $I'$ has a solution. 
However, this argument does not work for the local search problems that we discuss in this paper: Every instance has at least one local optimum, which can be computed in polynomial time.
To deal with this issue, we need to carefully construct an instance $I'$ that has a unique ``trivial solution'' that is irrelevant to a solution of $I$ and a ``nontrivial solution'' that is relevant to a solution of $I$.
This would make the construction of $I'$ and its proof more involved compared with standard NP-hardness reductions.

We would like to mention that our work is particularly relevant to practical aspects of local search. 
From a practical point of view, finding multiple local optimal solutions is crucial as some of them can be significantly bad compared to the global ones, while a simple hill-climbing algorithm may find such a bad solution.
There are numerous approaches to avoid getting stuck in bad local optimal solutions, such as iterated local search~\cite{LourencoMS03} and multi-start local search~\cite{MartiRR13}, in which one aims to mitigate this issue by essentially finding multiple local optimal solutions.
Our results would shed light on theoretical obstacles for such approaches.

On the positive side, we give a polynomial-time algorithm for deciding whether an input graph $G$ has at least two $2$-maximal matchings (i.e., $2$-maximal independent sets in the line graph).
Our algorithm also works for finding two $k$-maximal matchings for any fixed $k \ge 1$.
Since we can solve the maximum matching problem in polynomial time, our result may not give an interesting consequence to this end.
However, we believe that the result itself is nontrivial and still an interesting case that we can find multiple local optimal solutions in polynomial time, in contrast with our hardness results of $k$-maximal independent sets.
We also give an efficient algorithm for finding multiple $k$-maximal independent sets in bounded-cliquewidth graphs.

\section{Preliminaries}

Let $G$ be a (multi)graph.
The vertex set and edge set of $G$ are denoted by $V(G)$ and $E(G)$, respectively.
For $v \in V(G)$, the set of neighbors of $v$ in $G$ is denoted by $N_G(v)$.
% The subscript of $G$ may be omitted when no confusion is possible.

A vertex set $S \subseteq V(G)$ is an \emph{independent set} of $G$ if no pair of vertices in $S$ are adjacent in $G$.
An independent set $S$ of $G$ is said to be \emph{maximal} if there is no vertex $v \in V(G) \setminus S$ such that $S \cup \{v\}$ is an independent set of $G$.
The concept of maximality can be generalized as in the following way.
For $k \ge 1$, an independent set $S$ is \emph{$k$-maximal} if for every $X \subseteq S$ with $|X| \le k - 1$ and $Y \subseteq V(G) \setminus S$ with $|Y| \ge |X| + 1$, $(S \cup Y) \setminus X$ is not an independent set of $G$.
Clearly, every (inclusion-wise) maximal independent set is 1-maximal and vice versa.
Moreover, every $k$-maximal independent set is $k'$-maximal for $k' \le k$.
\begin{observation}\label{obs:2-maximal}
    Let $S$ be a maximal independent set of $G$ that is not $2$-maximal.
    Then, there are a vertex $v \in S$ and distinct neighbors $u, w \in N_G(v) \setminus S$ such that $(S \cup \{u, w\}) \setminus \{v\}$ is an independent set of $G$. 
\end{observation}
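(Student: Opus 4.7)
The plan is to unpack the definition of "not $2$-maximal" and reduce it to the required structural statement, using the maximality of $S$ to rule out degenerate cases. Since $S$ is not $2$-maximal, there exist $X \subseteq S$ with $|X| \le 1$ and $Y \subseteq V(G) \setminus S$ with $|Y| \ge |X|+1$ such that $(S \cup Y) \setminus X$ is an independent set. The proof then splits into two cases according to the size of $X$.

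First I would dispose of the case $|X| = 0$: in this case $S \cup Y$ is independent with $|Y| \ge 1$, so picking any $y \in Y$ yields an independent set $S \cup \{y\}$, contradicting the (inclusion-wise) maximality of $S$. Therefore we must have $|X| = 1$; write $X = \{v\}$ with $v \in S$, and choose any two distinct vertices $u, w \in Y$ (which exist since $|Y| \ge 2$). Note $u, w \in V(G) \setminus S$, and $(S \cup \{u, w\}) \setminus \{v\}$ is independent as a subset of $(S \cup Y) \setminus \{v\}$.

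The final step, which is the only place maximality has to be invoked again, is to verify that both $u$ and $w$ actually lie in $N_G(v)$. Suppose for contradiction that $u \notin N_G(v)$. Since $(S \cup \{u,w\}) \setminus \{v\}$ is independent, $u$ is non-adjacent to every vertex of $S \setminus \{v\}$; combined with $u \notin N_G(v)$, this means $u$ is non-adjacent to every vertex of $S$, so $S \cup \{u\}$ would be an independent set properly containing $S$, contradicting the maximality of $S$. The same argument applies to $w$. Hence $u, w \in N_G(v) \setminus S$ are distinct, as claimed.

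The argument is essentially a direct unfolding of the definitions, so there is no real obstacle; the only point requiring a bit of care is recognizing that maximality of $S$ must be used twice---once to exclude $|X| = 0$, and once to ensure the two swapped-in vertices are both neighbors of the swapped-out vertex rather than arbitrary non-neighbors of $S \setminus \{v\}$.
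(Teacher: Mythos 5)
Your proof is correct, and it is exactly the intended argument: the paper states this as an observation without proof precisely because it is the direct unfolding of the definition that you carry out. Both uses of maximality (to force $|X|=1$ and to force $u,w \in N_G(v)$) are handled properly, so there is nothing to add.
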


Let $D \subseteq V(G)$.
A vertex $v \in V(G) \setminus D$ is \emph{dominated by $D$} (or $D$ \emph{dominates} $v$) if $N_G(v) \cap D \neq \emptyset$.
We say that $D$ is a \emph{dominating set} of $G$ if every vertex in $V(G) \setminus D$ is dominated by $D$.
Moreover, when $D$ is simultaneously an independent set and a dominating set of $G$, we call it an \emph{independent dominating set} of $G$.
Let us note that a vertex set is a maximal independent set of $G$ if and only if it is an independent dominating set of $G$.

A \emph{cut} of $G$ is an unordered pair of vertex sets $\{X, Y\}$ such that $X \cup Y = V(G)$ and $X \cap Y = \emptyset$.
The (multi)set of edges between $X$ and $Y$ is denoted by $E(X, Y)$.
A cut $\{X, Y\}$ is said to be \emph{improvable} if there is a vertex $v \in  V(G)$ such that $|E(X \symdif \{v\}, Y \symdif \{v\})| > |E(X, Y)|$.
A cut that is not improvable is called a \emph{stable cut} of $G$.

Let $\varphi$ be a CNF formula.
In this paper, $\varphi$ may contain multiple identical clauses.
The set of variables in $\varphi$ is denoted by $V(\varphi)$.
Let $\alpha: V(\varphi) \to \{\True, \False\}$ be a truth assignment for $\varphi$.
A clause is said to be \emph{satisfied} under $\alpha$ if it is evaluated to $\True$ under $\alpha$.
Moreover, a clause is \emph{NAE-satisfied} under $\alpha$ if it is satisfied under both $\alpha$ and its complement $\overline{\alpha}$, where $\overline{\alpha}(x) \coloneqq \neg \alpha(x)$ for $x \in V(\varphi)$.
The numbers of satisfied and NAE-satisfied clauses under $\alpha$ are denoted by $\sat_\varphi(\alpha)$ and $\naesat_\varphi(\alpha)$, respectively. 
We may omit the subscript $\varphi$ when no confusion is possible.
For $x \in V(\varphi)$, we define a truth assignment $\alpha_x$ as: for $y \in V(\varphi)$,
\begin{align*}
    \alpha_x(y) = \begin{cases}
        \alpha(y) & y \neq x\\
        \overline{\alpha}(y) & y = x
    \end{cases}.
\end{align*}
In other words, $\alpha_x$ is obtained from $\alpha$ by flipping the assignment of $x$.
We say that $\alpha$ is (\emph{NAE}-)\emph{flippable} if the number of (NAE-)satisfied clauses under $\alpha$ is strictly smaller than that of satisfied clauses under $\alpha_x$ for some variable $x \in V(\varphi)$, that is, $\sat(\alpha_x) > \sat(\alpha)$ ($\naesat(\alpha_x) > \naesat(\alpha)$); otherwise it is (\emph{NAE}-)\emph{unflippable}.

\section{Local Search Graph Problems}
In this section, we investigate the intractability of finding multiple local optima for several graph problems.
More precisely, we show that for any fixed $k \ge 2$, the problem of determining whether an input graph $G$ has at least two $k$-maximal independent sets is NP-complete.
We also prove that similar hardness holds for other local search graph problems.
In contrast, we show several tractable cases for finding multiple $k$-maximal independent sets.

\subsection{NP-Hardness}

We start with the case $k = 2$ and then extend it to the general case $k \ge 2$.
\begin{theorem}\label{thm:nph:main}
    It is NP-complete to determine whether an input graph has at least two $2$-maximal independent sets.
\end{theorem}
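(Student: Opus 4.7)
The plan is the following. First, I would verify NP membership: a certificate consists of two distinct subsets $S_1, S_2 \subseteq V(G)$, and by \cref{obs:2-maximal}, checking 2-maximality of a maximal $S_i$ reduces to checking, for each $v \in S_i$ and each pair of non-adjacent vertices $u, w \in N_G(v) \setminus S_i$, whether $(S_i \setminus \{v\}) \cup \{u, w\}$ is independent; this, together with the tests for independence, maximality, and $S_1 \ne S_2$, runs in polynomial time.

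For NP-hardness, I would reduce from $3$-SAT (or a convenient variant thereof). The key difficulty, emphasised by the authors in the introduction, is that every graph already admits at least one 2-maximal independent set, and one can moreover be produced by hill climbing in polynomial time, so a standard one-to-one gadget reduction cannot work: the constructed graph $G_\varphi$ automatically possesses \emph{some} 2-maximal independent set regardless of satisfiability. Instead, from a formula $\varphi$ the plan is to build $G_\varphi$ so that it admits a \emph{canonical} 2-maximal independent set $S_0$ for every $\varphi$, and admits a \emph{second} 2-maximal independent set $S_1 \ne S_0$ if and only if $\varphi$ is satisfiable.

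The construction would interlock three pieces: (i) a \emph{default} component realising $S_0$, in which each vertex of $S_0$ is equipped with private, pendant-like dominators chosen so that $S_0$ survives every $(1,2)$-swap; (ii) for each variable $x_i$ a gadget providing two candidate vertices $T_i$ and $F_i$, with forcing edges so that any 2-maximal $S_1 \ne S_0$ must contain exactly one of them, consistently encoding a truth assignment; and (iii) for each clause $C_j$ a gadget attached to the literal vertices so that $S_1$ is maximal precisely when the encoded assignment satisfies every clause, and is then 2-maximal as well. Cross-edges between the components would be placed to rule out hybrid selections that mix pieces of $S_0$ with pieces of the variable--clause structure and thereby produce spurious 2-maximal sets unrelated to satisfying assignments.

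The main obstacle I anticipate is the two-sided tension: $S_0$ must be genuinely 2-maximal (for each $v \in S_0$, no pair of pairwise non-adjacent vertices outside $S_0$ may have $v$ as its unique neighbour in $S_0$), while simultaneously the variable and clause gadgets must leave room for a distinct $S_1$ that is itself 2-maximal exactly when $\varphi$ is satisfiable. Any extra edge introduced to kill a candidate $(1,2)$-swap against $S_0$ can accidentally create a new swap opportunity against $S_1$, and conversely, so the gadgets have to be tuned so that every candidate swap is blocked either by a dedicated private neighbour in the default component or by an adjacency already forced by the variable or clause gadget. This is precisely the \emph{careful construction} that the authors announce as their main technical work, and I expect getting this balance right to be the bulk of the proof.
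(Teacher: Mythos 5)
Your NP-membership argument is fine, and your high-level diagnosis of the difficulty --- that the constructed graph must have a unique ``trivial'' 2-maximal independent set plus a second one tied to the source instance --- is exactly the issue the paper wrestles with. But the proposal stops at the point where the proof actually begins: you describe what the default component, variable gadgets, and clause gadgets \emph{would need to accomplish}, and you correctly predict that tuning them against each other is ``the bulk of the proof,'' yet no concrete gadget is given and no direction of the equivalence is verified. In particular you never exhibit the canonical set $S_0$, never prove it is 2-maximal, never prove it is the \emph{only} 2-maximal set of its kind, and never show that a second 2-maximal set exists iff $\varphi$ is satisfiable. Since the entire technical content of the hardness proof lives in precisely those steps, this is a genuine gap rather than a stylistic omission.

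It is also worth noting that the paper sidesteps much of the gadget engineering you anticipate by reducing not from \textsc{3-SAT} but from \textsc{Maximal Independent Set Extension} (does $G$ have a maximal independent set disjoint from a prescribed set $X$?), which is already NP-complete and already speaks the language of independent sets. The construction there adds a small control gadget on vertices $a, b, b', c, c'$, attaches a 4-cycle and two ``blocker'' vertices $z_i, z'_i$ to each vertex of $V(G)\setminus X$, and then proves: (i) every 2-maximal set containing $a$ equals one fixed set $S_a$; (ii) every other 2-maximal set contains $b$, avoids $X$, and induces an independent dominating set of $G$ avoiding $X$; and (iii) conversely every such dominating set lifts to a 2-maximal set. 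If you pursue your 3-SAT route you will have to re-derive, inside your gadgets, essentially the same uniqueness and domination arguments, plus additional bookkeeping to exclude hybrid solutions --- all of which is currently missing. I would recommend either switching to the extension problem as the source, or committing to an explicit gadget construction and carrying out the case analysis for every vertex of both $S_0$ and the candidate $S_1$.
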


To prove \Cref{thm:nph:main}, we perform a polynomial-time reduction from the following problem.
Given a graph $G$ and $X \subseteq V(G)$, \textsc{Maximal Independent Set Extension} asks whether $G$ has a maximal independent set $D$ such that $D \cap X = \emptyset$.
In other words, the vertex set $D$ is an independent dominating set of $G$ that has no vertices in $X$.
This problem is known to be NP-complete~\cite{CaselFGMS19,ConteT22}.
As observed in \cite{CaselFGMS19}, we can assume that $X$ is an independent set of $G$.
Moreover, without loss of generality, we assume that $G$ has no isolated vertices.

We construct a graph $H$ as follows.
Let $Y = V(G) \setminus X$, where $Y = \{y_1, \dots, y_t\}$.
Starting from $H \coloneqq G$, we add five vertices~$a, b, b', c, c'$ and add edges between $a$ and $b$, between $b$ and vertices in $X$, and $\{b, c\}, \{b, c'\}, \{b', c\}, \{b', c'\}$.
The vertices $b, b', c, c'$ form a 4-cycle.
For each $y_i \in Y$, we add two vertices $z_i$ and $z'_i$ that are adjacent to $y_i$, $b$, and $b'$.
We let $Z = \{z_i, z'_i : 1 \le i \le t\}$.
Moreover, we add three vertices $c_i, c'_i, y'_i$ and edges that form a 4-cycle with $y_i$ for each $y_i \in Y$.
\Cref{fig:npc} illustrates the constructed graph $H$.
\begin{figure}
    \centering
    \includegraphics[width=0.6\linewidth]{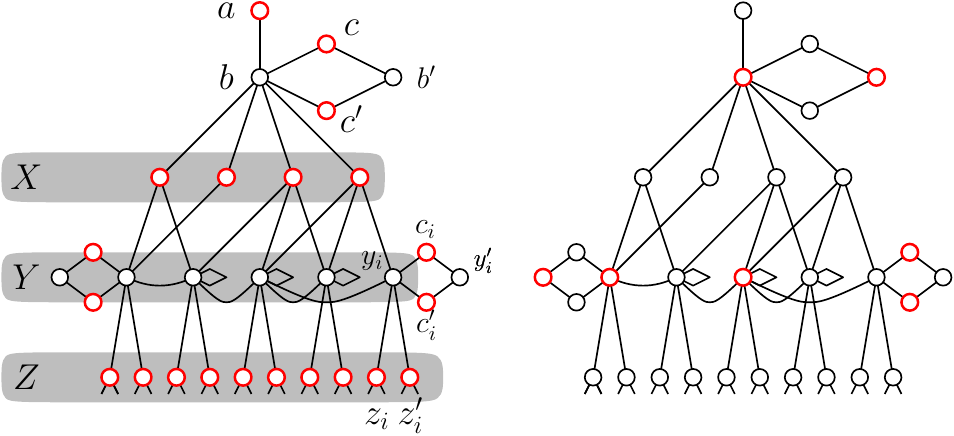}
    \caption{The left figure illustrates the graph $H$. The edges between $Z$ and $\{b, b'\}$ and some 4-cycles attached to vertices in $Y$ are omitted and simplified due to visibility. The red circles indicate the vertices in 2-maximal independent sets of $H$.}
    \label{fig:npc}
\end{figure}

The idea behind our construction is as follows.
Suppose that there is a 2-maximal independent set $S$ of $H$ with $b \in S$.
Since $S$ is an independent set, we have $a \notin S$ and $X \cap S = \emptyset$.
Due to the 2-maximality of $S$, $Y \cap S$ dominates all vertices in $X$: If $x \in X$ is not dominated by $Y \cap S$, $(S \setminus \{b\}) \cup \{a, x\}$ is an independent set of $H$.
To prove \Cref{thm:nph:main}, we need to ensure that $H$ has at most one 2-maximal independent set $S$ with $b \notin S$.
The other parts are used for this.

We then show that $G$ has an independent dominating set $D$ with $D \cap X = \emptyset$ if and only if $H$ has at least two $2$-maximal independent sets. 
To this end, we start with showing several observations.

\begin{observation}\label{obs:npc:incb}
    Let $S$ be a $2$-maximal independent set of $H$ with $b \in S$.
    Then, we have (1) $b' \in S$ and (2) $S \cap X = \emptyset$ and $S \cap Z = \emptyset$.
\end{observation}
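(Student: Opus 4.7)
The plan is to read off both statements directly from the adjacency structure of $b$ and $b'$ in $H$, using only that $S$ is independent and (1-)maximal, which is implied by 2-maximality.

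First I would list the neighborhood of $b$ from the construction: $N_H(b) = \{a\} \cup X \cup \{c, c'\} \cup Z$, coming from the edge $\{a, b\}$, the edges joining $b$ to every vertex of $X$, the 4-cycle edges $\{b, c\}$ and $\{b, c'\}$, and the edges joining $b$ to every $z_i, z'_i$. Since $b \in S$ and $S$ is an independent set, none of these neighbors can lie in $S$; in particular $X \cap S = \emptyset$ and $Z \cap S = \emptyset$, giving part~(2). As a byproduct, $a, c, c' \notin S$ as well.

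For part~(1), I would next compute the neighborhood of $b'$: the only edges incident to $b'$ are the 4-cycle edges $\{b', c\}$ and $\{b', c'\}$ together with the edges from $b'$ to every $z_i, z'_i$, so $N_H(b') = \{c, c'\} \cup Z$. From the previous paragraph each of these vertices is outside $S$, so $S \cup \{b'\}$ is still an independent set of $H$. Since every 2-maximal independent set is maximal (in the ordinary 1-maximal sense), this forces $b' \in S$, which is part~(1).

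The argument is essentially bookkeeping of neighborhoods, so there is no real obstacle beyond reading the construction correctly; the only subtlety is verifying that the remaining gadgets (the edges inherited from $G$ and the 4-cycles through $c_i, c'_i, y'_i$ attached to each $y_i \in Y$) contribute no additional neighbors of $b$ or $b'$, which is precisely what allows (1) and (2) to follow just from listing $N_H(b)$ and $N_H(b')$ without invoking 2-maximality at all for (2), and only 1-maximality for (1).
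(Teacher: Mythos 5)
Your proof is correct, and part (1) is argued by a genuinely different (and slightly more economical) route than the paper's. For (2) you and the paper do the same thing: $X \cup Z \subseteq N_H(b)$, so independence of $S$ together with $b \in S$ already forces $S \cap (X \cup Z) = \emptyset$. For (1), the paper argues by contradiction using 2-maximality: if $b' \notin S$, then $c$ and $c'$ (whose only neighbors are $b$ and $b'$) can both be added after removing $b$, so $(S \cup \{c, c'\}) \setminus \{b\}$ is independent, violating 2-maximality. You instead observe the containment $N_H(b') = \{c, c'\} \cup Z \subseteq N_H(b)$, so every neighbor of $b'$ is excluded from $S$ once $b \in S$, and ordinary (1-)maximality alone forces $b' \in S$. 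Both arguments are valid; yours shows the observation holds for any maximal independent set containing $b$, not just 2-maximal ones, which is a mild strengthening, while the paper's version foreshadows the 2-swap arguments it reuses in \Cref{obs:npc:inca} and \Cref{obs:npc:c4}. Your neighborhood bookkeeping matches the construction (the inherited edges of $G$ and the 4-cycles on $y_i, c_i, y'_i, c'_i$ contribute no neighbors to $b$ or $b'$), so there is no gap.
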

\begin{proof}
    Since $b \in S$, we have $c, c' \notin S$.
    When $b' \notin S$, $S$ is not 2-maximal since $(S \cup \{c, c'\}) \setminus \{b\}$ is an independent set of $H$.
    The second statement follows as $b \in N_G(v)$ for all $v \in X \cup Z$.
\end{proof}

\begin{observation}\label{obs:npc:inca}
    Let $S$ be a $2$-maximal independent set of $H$ with $b \notin S$.
    Then, we have (1) $\{a, c, c'\} \subseteq S$ and (2) $S \cap Y = \emptyset$ and $X \cup Z \subseteq S$.
\end{observation}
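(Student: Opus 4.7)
The plan is to derive each inclusion by alternating appeals to (1-)maximality of $S$ and to 2-maximality via \cref{obs:2-maximal}, settling part~(1) first and then using it to pin down the sets in part~(2). Throughout, I will freely use the neighborhoods $N_H(a)=\{b\}$, $N_H(c)=N_H(c')=\{b,b'\}$, $N_H(z_i)=N_H(z'_i)=\{y_i,b,b'\}$, and the fact that $X$ is independent in $G$, so $N_G(x)\subseteq Y$ for every $x\in X$.

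For part~(1), the vertex $a$ has only $b$ as a neighbor in $H$, so $b\notin S$ together with maximality immediately yields $a\in S$. For $c$, I would argue by contradiction: if $c\notin S$, then since $N_H(c)=\{b,b'\}$ and $b\notin S$, maximality forces $b'\in S$; this in turn forces $c'\notin S$ (as $c'$ is adjacent to $b'$), so the swap $(S\setminus\{b'\})\cup\{c,c'\}$ is well-defined and is independent, because $c$ and $c'$ are non-adjacent and each of them has $\{b,b'\}$ as its entire $H$-neighborhood. This contradicts \cref{obs:2-maximal}, so $c\in S$. The argument for $c'$ is symmetric; equivalently, once $c\in S$ is known, adjacency of $c$ with $b'$ gives $b'\notin S$, after which $c'\notin S$ would have no neighbor in $S$ at all, violating maximality.

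For part~(2), I would first establish $S\cap Y=\emptyset$. Suppose $y_i\in S$ for some $i$; then $z_i,z'_i\notin S$ (both adjacent to $y_i$). The set $(S\setminus\{y_i\})\cup\{z_i,z'_i\}$ is independent, because $z_i,z'_i$ are non-adjacent and their common neighborhood $\{y_i,b,b'\}$ is disjoint from the swapped set (using $b\notin S$ by hypothesis and $b'\notin S$ from part~(1)). This contradicts \cref{obs:2-maximal}. With $S\cap Y=\emptyset$ in hand, the remaining inclusions collapse to plain maximality: each $z_i$ has its neighborhood $\{y_i,b,b'\}$ entirely outside $S$, forcing $z_i\in S$ (and similarly $z'_i\in S$), so $Z\subseteq S$; and each $x\in X$ has neighborhood $\{b\}\cup N_G(x)\subseteq\{b\}\cup Y$ entirely outside $S$, forcing $x\in S$, so $X\subseteq S$.

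The main technical delicacy is the $c$-step of part~(1), where maximality must be chained with 2-maximality and one must cleanly handle the subcase in which $c'$ might also happen to belong to $S$ (which is ruled out by the forced $b'\in S$). Everything else is a short cascade of maximality deductions plus the single extra 2-maximality swap for the $y_i$ case.
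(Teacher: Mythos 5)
Your proof is correct and follows essentially the same route as the paper's: maximality gives $a\in S$, the swap $(S\setminus\{b'\})\cup\{c,c'\}$ rules out $b'\in S$ (you merely package this as a contradiction starting from $c\notin S$), the swap $(S\setminus\{y_i\})\cup\{z_i,z'_i\}$ rules out $y_i\in S$, and maximality then forces $X\cup Z\subseteq S$. No substantive differences.
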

\begin{proof}
    Since $b \notin S$, we have $a \in S$ due to the maximality of $S$.
    Suppose that $b' \in S$.
    Then, we have $c, c' \notin S$.
    This contradicts the $2$-maximality of $S$: $(S \cup \{c, c'\}) \setminus \{b'\}$ is an independent set of $H$.
    Thus, we have $b' \notin S$.
    By the maximality of $S$, we have $\{c, c'\} \subseteq S$.
    For the second statement, suppose to the contrary that $y_i \in S$ for some $y_i \in Y$.
    Since $z_i, z'_i \notin S$ and $b, b' \notin S$, it holds that $(S \cup \{z_i, z'_i\}) \setminus \{y_i\}$ is an independent set of $H$, contradicting the 2-maximality of $S$.
    Hence $S \cap Y = \emptyset$.
    As $b, b' \notin S$, we have $X \cup Z \subseteq S$ due to the maximality of $S$.
\end{proof}

The next observation follows from a similar discussion.

\begin{observation}\label{obs:npc:c4}
    Let $S$ be a $2$-maximal independent set of $H$.
    For each $1 \le i \le t$, either $\{y_i, y'_i\} \subseteq S$ or $\{c_i, c'_i\} \subseteq S$.
\end{observation}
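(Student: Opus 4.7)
The plan is a straightforward case analysis on $S \cap \{y_i, c_i, c'_i\}$, relying on the fact that in our construction the three newly added vertices $y'_i, c_i, c'_i$ are ``locally confined'': they have degree exactly two in $H$, with $N_H(y'_i) = \{c_i, c'_i\}$ and $N_H(c_i) = N_H(c'_i) = \{y_i, y'_i\}$. Only $y_i$ has further neighbors in $H$ (namely, its neighbors inherited from $G$ together with $z_i, z'_i, b, b'$), but those play no role here.

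If $y_i \in S$, then by independence $c_i, c'_i \notin S$, so $N_H(y'_i) = \{c_i, c'_i\} \subseteq V(H) \setminus S$, and maximality of $S$ forces $y'_i \in S$, giving $\{y_i, y'_i\} \subseteq S$. If $y_i \notin S$, I would split on $|S \cap \{c_i, c'_i\}|$. The case $\{c_i, c'_i\} \subseteq S$ is the second conclusion. If exactly one of them, say $c_i$, lies in $S$, then $y'_i \notin S$ by independence, but then $N_H(c'_i) = \{y_i, y'_i\}$ is disjoint from $S$, contradicting maximality applied to $c'_i$. Finally, if neither $c_i$ nor $c'_i$ lies in $S$, then either $y'_i \notin S$---and then $N_H(c_i) = \{y_i, y'_i\}$ is disjoint from $S$, again contradicting maximality---or $y'_i \in S$, in which case the swap $S' \coloneqq (S \setminus \{y'_i\}) \cup \{c_i, c'_i\}$ is independent (since $N_H(c_i), N_H(c'_i) \subseteq \{y_i, y'_i\} \subseteq V(H) \setminus S'$) and strictly larger than $S$, violating $2$-maximality.

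The only step worth flagging is this last sub-case: it is precisely where ordinary maximality does not suffice and the $1$-for-$2$ swap afforded by $2$-maximality is genuinely used. Every other sub-case is resolved by invoking only the maximality of $S$.
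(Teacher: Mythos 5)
Your case analysis is correct and is exactly the argument the paper leaves implicit (it only says the observation ``follows from a similar discussion'' to the one for the 4-cycle on $b,b',c,c'$): ordinary maximality handles every sub-case except the one where $y_i, c_i, c'_i \notin S$ and $y'_i \in S$, which is resolved by the $1$-for-$2$ swap $(S \setminus \{y'_i\}) \cup \{c_i, c'_i\}$ forbidden by $2$-maximality. Nothing is missing.
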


\Cref{obs:npc:inca,obs:npc:c4} yields the following observation.
\begin{observation}\label{obs:npc:unique-a}
    Let $S_a = \{a, c, c'\} \cup X \cup Z \cup \{c_i, c'_i : 1 \le i \le t\}$.
    Then, $S_a$ is a 2-maximal independent set of $H$.
    Moreover, $H$ has no other 2-maximal independent set that contains $a$.
\end{observation}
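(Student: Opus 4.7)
The plan is to verify that $S_a$ is a 2-maximal independent set of $H$ and then to derive uniqueness from \Cref{obs:npc:inca,obs:npc:c4}.

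First, I would check independence by tracing the adjacencies of each type of vertex in $S_a$: the only neighbor of $a$ is $b$; the neighbors of $c$ and $c'$ lie in $\{b,b'\}$; each $v \in X$ has neighbors in $\{b\} \cup N_G(v)$, with $N_G(v) \subseteq Y$ since $X$ is independent in $G$; each $z_i, z'_i \in Z$ has neighbors $\{y_i, b, b'\}$; and each $c_i, c'_i$ has neighbors $\{y_i, y'_i\}$. None of these land in $S_a$, so $S_a$ is independent. Maximality is equally direct: $V(H) \setminus S_a = \{b, b'\} \cup \{y_i, y'_i : 1 \le i \le t\}$, and each of these vertices has an obvious neighbor in $S_a$ (for instance $a \in N_H(b)$, $c \in N_H(b')$, $z_i \in N_H(y_i)$, $c_i \in N_H(y'_i)$).

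For 2-maximality, I would invoke \Cref{obs:2-maximal}: it suffices to rule out vertices $u \notin S_a$ with $|N_H(u) \cap S_a| = 1$, since the condition there requires two such vertices sharing the same unique neighbor in $S_a$. In fact every outside vertex has at least two neighbors in $S_a$, namely $N_H(b) \cap S_a \supseteq \{a, c, c'\}$, $N_H(b') \cap S_a \supseteq \{c, c'\}$, $N_H(y_i) \cap S_a \supseteq \{z_i, z'_i, c_i, c'_i\}$, and $N_H(y'_i) \cap S_a = \{c_i, c'_i\}$. Hence the criterion of \Cref{obs:2-maximal} is never satisfied and $S_a$ is 2-maximal.

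For the uniqueness part, let $S$ be any 2-maximal independent set of $H$ with $a \in S$. Since $a$ and $b$ are adjacent, $b \notin S$, so \Cref{obs:npc:inca} yields $\{a, c, c'\} \cup X \cup Z \subseteq S$ and $S \cap Y = \emptyset$. For each $1 \le i \le t$, since $y_i \notin S$, \Cref{obs:npc:c4} forces $\{c_i, c'_i\} \subseteq S$, and then $y'_i \notin S$ because $y'_i \in N_H(c_i)$. These inclusions pin down $S$ completely and give $S = S_a$. The main step is just the careful bookkeeping of the adjacencies of $H$; once those are laid out, both the 2-maximality of $S_a$ and its uniqueness among those containing $a$ follow with essentially no further case analysis.
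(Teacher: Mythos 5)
Your proof is correct and follows essentially the same route as the paper: the 2-maximality of $S_a$ is established by observing (via \Cref{obs:2-maximal}) that every vertex outside $S_a$ has at least two neighbors in $S_a$, and uniqueness is deduced from \Cref{obs:npc:inca,obs:npc:c4}. You simply spell out the adjacency bookkeeping that the paper leaves implicit.
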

\begin{proof}
    We show that $S_a$ is a 2-maximal independent set of $H$.
    Observe that for $v \in S_a$, each neighbor of $v$ is dominated by at least two vertices in $S_a$.
    Thus, by~\Cref{obs:2-maximal}, $S_a$ is 2-maximal.
    The uniqueness of $S_a$ immediately follows from \Cref{obs:npc:inca,obs:npc:c4}. 
\end{proof}

Now, we show that $H$ has a 2-maximal independent set $S$ with $S \neq S_a$ if and only if $G$ has a maximal independent set (or, equivalently, an independent dominating set) $D$ with $D \cap X = \emptyset$.

\begin{lemma}\label{lem:npc:forward}
    Suppose that $H$ has a 2-maximal independent set $S$ with $S \neq S_a$.
    Then, $G$ has an independent dominating set $D$ with $D \cap X = \emptyset$.
\end{lemma}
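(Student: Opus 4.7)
The plan is to distill from $S$ an independent dominating set $D$ of $G$ avoiding $X$ as follows. First I would pin down the structure of $S$ using the observations already established, then argue that $D_0 \coloneqq S \cap V(G)$ is an independent set of $G$ contained in $Y$ that dominates $X$, and finally extend $D_0$ greedily to a maximal independent set of $G[Y]$, which will be the desired $D$.

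Since $S \neq S_a$, \Cref{obs:npc:unique-a} forces $a \notin S$. Taking the contrapositive of \Cref{obs:npc:inca} (if $b \notin S$, then $a \in S$), we obtain $b \in S$. Then \Cref{obs:npc:incb} yields $b' \in S$, $S \cap X = \emptyset$, and $S \cap Z = \emptyset$. Consequently $D_0 = S \cap V(G) = S \cap Y \subseteq Y$, and $D_0$ is independent in $G$ because $S$ is independent in $H$ and $E(G) \subseteq E(H)$.

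The crucial step is showing that $D_0$ dominates $X$ in $G$. Suppose for contradiction that some $x \in X$ satisfies $N_G(x) \cap D_0 = \emptyset$. Because $X$ is independent in $G$, $N_G(x) \subseteq Y$, hence $N_G(x) \cap S = \emptyset$; since the only neighbor of $x$ in $H$ outside $V(G)$ is $b$, it follows that $x$ has no neighbor in $S \setminus \{b\}$. The vertex $a$ has only $b$ as a neighbor in $H$, so $a$ has no neighbor in $S \setminus \{b\}$ either, and in particular $a$ is not adjacent to $x$. Therefore $(S \setminus \{b\}) \cup \{a, x\}$ is an independent set of $H$, contradicting the $2$-maximality of $S$ via \Cref{obs:2-maximal} applied with $v = b$ and $\{u, w\} = \{a, x\} \subseteq N_H(b) \setminus S$.

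Finally, greedily extend $D_0$ to a maximal independent set $D$ of the induced subgraph $G[Y]$. Then $D \subseteq Y$ is independent in $G$ with $D \cap X = \emptyset$; maximality of $D$ in $G[Y]$ ensures that every vertex of $Y \setminus D$ has a $G$-neighbor in $D$, while $D \supseteq D_0$ still dominates $X$. Hence $D$ is an independent dominating set of $G$ disjoint from $X$, as required. The subtlety worth flagging is that $D_0$ alone need not dominate $Y$ in $G$: a vertex $y_i \notin S$ is already dominated in $H$ by $c_i, c'_i \in S$, and the $2$-maximality of $S$ does not force $y_i$ to have a $G$-neighbor in $D_0$ (attempting to swap $c_i$ out in favour of $y_i$ fails because $c'_i$ remains adjacent to $y_i$), so the final greedy extension is genuinely needed rather than cosmetic.
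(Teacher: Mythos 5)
Your proof is correct, and it diverges from the paper's in one genuine (and instructive) way: how domination of $Y$ is secured. The paper picks $S$ extremally, as a $2$-maximal independent set distinct from $S_a$ that \emph{maximizes} $|S\cap Y|$, and then argues that if some $y_i$ were undominated by $S\cap Y$, the swap $S'=(S\setminus\{c_i,c'_i\})\cup\{y_i,y'_i\}$ would produce another $2$-maximal independent set with larger $|S'\cap Y|$, a contradiction; this keeps the entire argument inside $H$ but requires verifying that $S'$ is again $2$-maximal. You instead accept that $D_0=S\cap Y$ may fail to dominate $Y$ (correctly diagnosing why: $y_i$ is already dominated in $H$ by $c_i,c'_i$, so $2$-maximality imposes nothing about $G$-neighbors of $y_i$ in $D_0$) and simply extend $D_0$ greedily to a maximal independent set of $G[Y]$, which costs nothing since the target is merely the \emph{existence} of some independent dominating set avoiding $X$. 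Your route is shorter and avoids the extremal choice and the re-verification of $2$-maximality; the paper's route has the mild advantage of exhibiting the dominating set directly as $S\cap Y$ for a suitable $S$. The crucial shared step, that $D_0$ dominates $X$ via the swap $(S\setminus\{b\})\cup\{a,x\}$ against \Cref{obs:2-maximal}, is identical in both arguments, and your justification of it (using that $X$ is independent in $G$, so $N_H(x)\subseteq Y\cup\{b\}$, and that $a$'s only neighbor is $b$) is complete.
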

\begin{proof}
    Let $S$ be a 2-maximal independent set of $H$ with $S \neq S_a$ that maximizes $|S \cap Y|$.
    By \Cref{obs:npc:unique-a}, we have $a \notin S$ and hence, by~\Cref{obs:npc:incb}, we have $b \in S$, $S \cap X = \emptyset$, and $S \cap Z = \emptyset$.
    Let $D = S \cap Y$.
    Clearly, $D$ is an independent set of $G$ with $D \cap X = \emptyset$.
    We show that $D$ dominates all the vertices in $V(G) \setminus D$.
    
    Observe first that all the vertices of $Y \setminus D$ are dominated by $D$.
    To see this, suppose that there is vertex $y_i \in Y$ that is not dominated by $D$.
    Then, $y_i$ has no neighbor in $S \cap Y$, and hence $S' \coloneqq (S \setminus \{c_i, c_i'\}) \cup \{y_i, y_i'\}$ is an independent set of $H$.
    This independent set $S'$ is indeed 2-maximal as every neighbor of a vertex in $S'$ is dominated by at least one other vertex in $S'$. 
    This contradicts the assumption that $S$ maximizes $|S \cap Y|$ as $|S' \cap Y| > |S \cap Y|$.
    
    Suppose that $v \in X$ is not dominated by $D$.
    Then, $(S \cup \{a, v\}) \setminus \{b\}$ is an independent set of $H$, contradicting the 2-maximality of $S$.
    Therefore, $D$ is a dominating set of $G$.
\end{proof}

\begin{lemma}\label{lem:npc:converse}
    Let $D \subseteq Y$ be an independent dominating set of $G$.
    Then, $\{b, b'\} \cup \{y_i, y'_i : y_i \in D\} \cup \{c_i, c'_i : y_i \notin D\}$ is a 2-maximal independent set of $H$.
\end{lemma}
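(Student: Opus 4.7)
The plan is to verify the three properties in order: that $S$ is independent, that $S$ is (inclusion-wise) maximal, and finally that $S$ is $2$-maximal by invoking \Cref{obs:2-maximal}.

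For independence, I would chase through the adjacencies in $H$. The vertices $b, b'$ have all their neighbors (namely $a$, $X$, $c, c'$, and $Z$) disjoint from $S$ by construction, so the only things to check are the $4$-cycles around each $y_i$. When $y_i \in D$ we put $\{y_i, y_i'\}$ into $S$ and exclude $\{c_i, c_i'\}$; since the $4$-cycle is $y_i$--$c_i$--$y_i'$--$c_i'$--$y_i$, the pair $y_i, y_i'$ is non-adjacent, so no conflict arises. When $y_i \notin D$ we put $\{c_i, c_i'\}$ into $S$ and exclude $\{y_i, y_i'\}$; again the two chosen vertices are non-adjacent. Finally, independence across different indices $i \neq j$ reduces to independence of $D$ in $G$, since the gadgets are otherwise disjoint.

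For maximality, I would walk through $V(H) \setminus S$: each of $a$, vertices of $X$, $c$, $c'$, and all $z_i, z_i'$ is adjacent to $b \in S$; each $y_i \notin D$ has a neighbor in $D \subseteq S$ because $D$ dominates $G$; each $y_i'$ with $y_i \notin D$ is adjacent to $c_i \in S$; and each $c_i, c_i'$ with $y_i \in D$ is adjacent to $y_i \in S$. So no vertex outside $S$ can be added.

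For $2$-maximality I would use \Cref{obs:2-maximal}: for every $v \in S$ and every pair of distinct neighbors $u, w \in N_H(v) \setminus S$, it suffices to exhibit an edge inside $(S \cup \{u, w\}) \setminus \{v\}$. For $v = b$, every outside neighbor except $a$ is adjacent to $b' \in S$, and for a neighbor $x \in X$ (the only way to involve $a$) the dominating property of $D$ supplies some $y_i \in D \cap N_G(x) \subseteq S \setminus \{b\}$ adjacent to $x$. For $v = b'$, every outside neighbor is adjacent to $b \in S$. For $v = y_i \in D$, the outside neighbors split into $N_G(y_i) \cap (Y \setminus D)$, $\{z_i, z_i'\}$, and $\{c_i, c_i'\}$; pairs touching $\{z_i, z_i'\}$ are killed by $b'$, pairs in $\{c_i, c_i'\}$ are killed by $y_i' \in S$, and for a neighbor $y_j \in Y \setminus D$ the vertices $c_j, c_j' \in S \setminus \{y_i\}$ are adjacent to $y_j$, so any pair involving some $y_j$ produces an edge. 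The remaining cases $v \in \{y_i', c_i, c_i'\}$ are short: removing $y_i'$ leaves $y_i \in S$ still adjacent to both $c_i, c_i'$; and removing $c_i$ (with $y_i \notin D$) leaves $c_i' \in S$ adjacent to both $y_i$ and $y_i'$, and symmetrically for $c_i'$.

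The main obstacle I anticipate is the case $v = y_i \in D$ with a pair of outside neighbors lying in $N_G(y_i) \cap (Y \setminus D)$: naively, removing $y_i$ might disconnect such a $y_j$ from $S$. The crucial observation that rescues this case is that whenever $y_j \notin D$ the private vertices $c_j, c_j'$ already sit in $S$ and remain so after removing $y_i$, so $y_j$ cannot be inserted without creating a conflict. Once this is highlighted, the remaining verifications are purely mechanical case-checks driven by the construction in \Cref{fig:npc}.
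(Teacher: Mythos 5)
Your proposal is correct and follows essentially the same route as the paper's proof: verify independence and maximality directly, then apply \Cref{obs:2-maximal} by showing that each $v \in S$ has at most one outside neighbor not dominated by $S \setminus \{v\}$, with the crucial case being exactly the one you highlight---a neighbor $y_j \in Y \setminus D$ of $y_i \in D$ is already covered by $c_j, c_j' \in S$. One cosmetic omission: for $v = y_i \in D$ you list the outside neighbors as $N_G(y_i) \cap (Y \setminus D)$, $\{z_i, z_i'\}$, and $\{c_i, c_i'\}$ but leave out $N_G(y_i) \cap X$; since every vertex of $X$ is adjacent to $b \in S \setminus \{y_i\}$, this sub-case is immediate and nothing breaks.
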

\begin{proof}
    Let $S = \{b, b'\} \cup \{y_i, y'_i : y_i \in D\} \cup \{c_i, c'_i : y_i \notin D\}$. 
    We claim that $S$ is a 2-maximal independent set of $H$.
    It is easy to verify that $S$ is an independent set of $H$.
    Since all vertices in $X \cup Z \cup \{a, c, c'\}$ are dominated by $b$ and $\{y_i, y'_i, c_i, c'_i\} \setminus S$ are dominated by $\{y_i, y'_i, c_i, c'_i\} \cap S$ for all $i$.
    Thus, $S$ is a maximal independent set of $H$.
    To see 2-maximality, by \Cref{obs:2-maximal}, it suffices to show that for each $v \in S$, there is at most one neighbor that is not dominated by $S \setminus \{v\}$.
    In the following, a vertex $v \in S$ is said to be \emph{stable} in $S$ if there is at most one neighbor that is not dominated by $S \setminus \{v\}$.

    Since the neighbors of $b'$ are also neighbors of $b$, it follows that $b'$ is stable in $S$.
    Moreover, $b$ is also stable as $b'$ dominates the vertices in $Z \cup \{c, c'\}$ and $D$ dominates the vertices in $X$.
    Note that $a$ is not dominated by other vertex in $S$.
    When one of $c_i, c'_i, y'_i$ is included in $S$, it is stable as pairs $\{c_i, c'_i\}$ and $\{y_i, y'_i\}$ dominate each other.
    Finally, if $y_i \in S$, it is stable since $c_i$ and $c'_i$ are dominated by $y'_i$, the vertices in $X \cup Z$ are dominated by $b$, and each neighbor $y_j$ in $Y$ is dominated by $c_j$.
    As all vertices in $S$ are stable, $S$ is 2-maximal.
\end{proof}

Hence, there are at least two 2-maximal independent sets in $H$ if and only if $G$ has an independent dominating set $D$ with $D \cap X = \emptyset$, completing the proof of \Cref{thm:nph:main}.

We can extend this proof to those for any fixed $k \ge 2$.
From an instance $(G, X)$ of \textsc{Maximal Independent Set Extension}, we construct the graph $H$ as above and convert it to a graph $H_k$ as follows.
We replace each vertex $v$ of $H$ with an independent set $M^v_k = \{v_1, \dots, v_{k - 1}\}$ of $k - 1$ vertices and add an edge between each vertex in $M^v_k$ and each vertex in $M^w_k$ if and only if $v$ and $w$ are adjacent in $H$.
The graph obtained in this way is denoted by $H_k$.
See~\Cref{fig:npc-k} for an illustration.
Note that $H = H_2$.

\begin{figure}
    \centering
    \includegraphics[width=0.5\linewidth]{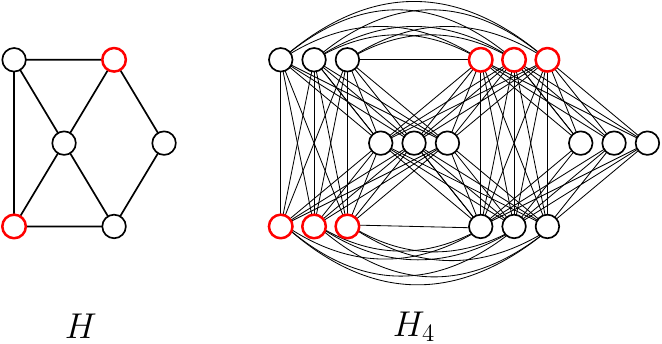}
    \caption{The graph $H_4$. The red circles indicate maximal independent sets.}
    \label{fig:npc-k}
\end{figure}

\begin{lemma}\label{lem:npc:k-ind:maximal}
    For any maximal independent set $S$ of $H$, $S_k \coloneqq \bigcup_{v \in S}M^v_k$ is a maximal independent set of $H_k$.
    Conversely, for any maximal independent set $S_k$ of $H_k$, $S \coloneqq \{v \in V(H): S_k \cap M^v_k \neq \emptyset\}$ is a maximal independent set of $H_k$.
    Moreover, $S_k \cap M^v_k \neq \emptyset$ implies that $M^v_k \subseteq S_k$ for any $v \in V(H)$.
\end{lemma}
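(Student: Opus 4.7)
The plan is to exploit the \emph{twin structure} of $H_k$: within any bag $M^v_k$ the vertices are pairwise non-adjacent in $H_k$ and share exactly the same external neighborhood $\bigcup_{w \in N_H(v)} M^w_k$. Thus all $k-1$ vertices of a single bag are true twins, and I expect any maximal independent set of $H_k$ to contain each bag en bloc, which is precisely what the ``moreover'' clause asserts and what allows the bijection between maximal independent sets of $H$ and of $H_k$.

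For the forward direction, given a maximal independent set $S$ of $H$, I first check that $S_k = \bigcup_{v \in S} M^v_k$ is independent: within a single bag this is immediate from the definition of $M^v_k$, and for two bags $M^v_k, M^w_k$ with $v, w \in S$ the non-adjacency of $v$ and $w$ in $H$ means there are no cross edges in $H_k$. For maximality, take $u \in V(H_k) \setminus S_k$, say $u \in M^v_k$ with $v \notin S$; by maximality of $S$ some $w \in N_H(v) \cap S$ exists, and then any vertex of $M^w_k \subseteq S_k$ is a neighbor of $u$ in $H_k$. For the ``moreover'' clause, suppose $u \in S_k \cap M^v_k$ and let $u' \in M^v_k \setminus \{u\}$. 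Since $u$ and $u'$ are non-adjacent (both lie in the independent bag $M^v_k$) and have identical external neighborhoods, $u'$ has no neighbor in $S_k$ (otherwise $u$ would too), so $S_k \cup \{u'\}$ is independent; maximality of $S_k$ forces $u' \in S_k$, giving $M^v_k \subseteq S_k$.

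With the twin property in hand, the converse direction is essentially forced. Defining $S = \{v \in V(H) : S_k \cap M^v_k \neq \emptyset\} = \{v : M^v_k \subseteq S_k\}$, independence of $S$ in $H$ follows because a hypothetical edge $vw$ with $v, w \in S$ would, via the complete bipartite connection between $M^v_k$ and $M^w_k$, produce adjacent vertices inside the independent set $S_k$. Maximality follows by picking any $v \notin S$ and an arbitrary $u \in M^v_k$: since $u \notin S_k$, the maximality of $S_k$ supplies a neighbor $u' \in S_k$ lying in some $M^w_k$, and then $w \in N_H(v) \cap S$ witnesses that $v$ is dominated by $S$. The only subtle point is the twin argument, which crucially uses that each bag is nonempty (ensured by $k \ge 2$) and that $M^v_k$ is itself independent in $H_k$; everything else is a routine unpacking of the definition of $H_k$, so I do not expect any real obstacle.
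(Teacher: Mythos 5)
Your proof is correct and follows essentially the same route as the paper's: the forward direction and the converse are argued identically (a vertex outside the set is dominated via its bag's neighbor bag), and your explicit twin argument for the ``moreover'' clause just fills in the detail that the paper compresses into ``due to the maximality of $S_k$.'' No gaps.
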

\begin{proof}
    For the first statement, suppose that $S$ is a maximal independent set of $H$.
    Clearly, $S_k = \bigcup_{v \in S}M^v_k$ is an independent set of $H_k$.
    For each $v \in V(H) \setminus S$, it has a neighbor $w \in S$ due to the maximality of $S$.
    This implies that each vertex in $M^v_k \setminus S_k$ has a neighbor $M^w_k \cap S_k$.
    Thus, $S_k$ is a maximal independent set of $H_k$.
    
    For the second statement, suppose that $S_k$ is a maximal independent set of $H_k$.
    Then, $S = \{v \in V(H): S_k \cap M^v_k \neq \emptyset\}$ is an independent set of $H$.
    If $S_k \cap M^v_k \neq \emptyset$, we have $M^v_k \subseteq S$ due to the maximality of $S_k$.
    Suppose that $S \coloneqq \{v \in V(H): M^v_k \subseteq S_k\}$ is not a maximal independent set of $H$.
    Then, there is a vertex $v \in V(H) \setminus S$ such that $S \cup \{v\}$ is an independent set.
    This implies that $S_k \cap M^v_k = \emptyset$ and for any $w \in V(H)$ that is adjacent to $v$, $M^w_k \cap S_k = \emptyset$.
    Thus, $S_k \cup M^v_k$ is also independent in $H_k$, contradicting the maximality of $S_k$.
\end{proof}

The above lemma indicates that there is a bijection between the collection of maximal independent sets in $H$ and the collection of maximal independent sets in $H_k$.
Moreover, the vertices in $M^v_k$ are ``homogeneous'' in any maximal independent set in $H_k$.
Now, we claim that this bijection maps every $2$-maximal independent set of $H$ to a $k$-maximal independent set of $H_k$.
Let $S$ be a maximal independent set of $H$ and let $S_k = \bigcup_{v \in S}M^v_k$ be a maximal independent set of $H_k$, which is the image of $S$.
\begin{lemma}\label{lem:npc:k-ind:forward}
    If $S_k$ is a $k$-maximal independent set of $H_k$, then $S$ is a 2-maximal independent set of $H$.    
\end{lemma}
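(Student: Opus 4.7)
The plan is to argue by contrapositive: assuming $S$ is not 2-maximal in $H$, I will exhibit a witness of non-$k$-maximality for $S_k$ in $H_k$ by blowing up the swap witness through the map $v \mapsto M^v_k$.

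First I would invoke \Cref{obs:2-maximal} (using that $S$ is maximal in $H$, which follows from \Cref{lem:npc:k-ind:maximal} applied to the maximal $S_k$): there exist $v \in S$ and distinct $u, w \in N_H(v) \setminus S$ with $(S \cup \{u, w\}) \setminus \{v\}$ independent in $H$. I then set $X \coloneqq M^v_k \subseteq S_k$ and $Y \coloneqq M^u_k \cup M^w_k \subseteq V(H_k) \setminus S_k$. Since $u \neq w$, the sets $M^u_k$ and $M^w_k$ are disjoint, so $|X| = k-1$ and $|Y| = 2(k-1)$, and the hypothesis $k \ge 2$ gives $|Y| \ge k = |X| + 1$, matching the quantifier pattern in the definition of $k$-maximality.

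Next I would verify that $(S_k \cup Y) \setminus X = \bigcup_{z \in (S \cup \{u,w\}) \setminus \{v\}} M^z_k$ is independent in $H_k$. This follows immediately from the construction of $H_k$: an edge in $H_k$ between $M^p_k$ and $M^q_k$ exists only when $\{p,q\} \in E(H)$ (and each $M^p_k$ is internally independent), so blowing up any independent set of $H$ yields an independent set of $H_k$. Since $(S \cup \{u,w\}) \setminus \{v\}$ is independent in $H$ by choice, the corresponding union is independent in $H_k$, which contradicts the $k$-maximality of $S_k$. Thus $S$ must be 2-maximal in $H$.

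There is no real obstacle beyond the bookkeeping: the only delicate point is ensuring the size inequality $2(k-1) \ge (k-1)+1$ uses exactly the assumption $k \ge 2$, and that the two added ``clouds'' $M^u_k$, $M^w_k$ are disjoint, which requires $u \ne w$ (guaranteed by \Cref{obs:2-maximal}). Everything else is a routine transfer along the correspondence established in \Cref{lem:npc:k-ind:maximal}.
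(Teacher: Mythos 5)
Your proposal is correct and follows essentially the same route as the paper: contrapositive, take the swap witness $v, u, w$ from the failure of 2-maximality, and blow it up to $X = M^v_k$, $Y = M^u_k \cup M^w_k$ with $|Y| = 2k-2 \ge k > k-1 = |X|$. Your explicit appeal to \Cref{obs:2-maximal} for the distinctness of $u$ and $w$ is a slightly more careful rendering of a step the paper leaves implicit, but the argument is identical in substance.
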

\begin{proof}
Suppose that $S$ is not 2-maximal.
Then, there are $v \in S$ and $u, w \in V(H) \setminus S$ such that $(S \setminus \{v\}) \cup \{u, w\}$ is an independent set of $H$.
Then $(S_k \setminus M^v_k) \cup (M^u_k \cup M^w_k)$ is an independent set of $H_k$, meaning that $S_k$ is not $k$-maximal as $|M^v_k| = k-1$ and $|M^u_k \cup M^w_k| = 2k-2 > k-1$ for $k \ge 2$.
\end{proof}

\begin{lemma}\label{lem:npc:k-ind:backward}
    If $S$ is a 2-maximal independent set of $H$, then $S_k$ is a $k$-maximal independent set of $H_k$.
\end{lemma}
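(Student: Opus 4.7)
The plan is to prove the contrapositive: assuming $S_k$ is not $k$-maximal in $H_k$, we exhibit a $2$-swap improvement to $S$ in $H$, contradicting its $2$-maximality via \Cref{obs:2-maximal}. So fix $X \subseteq S_k$ with $|X| \le k-1$ and $Y \subseteq V(H_k) \setminus S_k$ with $|Y| \ge |X|+1$ such that $(S_k \cup Y) \setminus X$ is an independent set of $H_k$, and set $U = \{u \in V(H) \setminus S : M^u_k \cap Y \ne \emptyset\}$.

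The central structural step exploits the twin-like nature of each block $M^w_k$: for any $u \in U$ and any neighbor $w \in N_H(u) \cap S$, the entire block $M^w_k$ must be contained in $X$, since otherwise a surviving vertex of $M^w_k \cap (S_k \setminus X)$ would be adjacent in $H_k$ to some vertex of $Y \cap M^u_k$, killing independence. Because $|X| \le k-1 = |M^w_k|$, at most one such $w$ can satisfy $M^w_k \subseteq X$; on the other hand, by the maximality of $S$ in $H$ (implied by its $2$-maximality), every $u \in U \subseteq V(H) \setminus S$ has at least one neighbor in $S$. Combining these facts, every $u \in U$ has exactly one neighbor in $S$, and this neighbor is the \emph{same} vertex $v \in S$ for all $u \in U$; in particular $X = M^v_k$ and $U \subseteq N_H(v)$.

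Finally, the size bound $|Y| \ge |X|+1 = k$ together with $|M^u_k| = k-1$ forces $|U| \ge 2$, so we pick distinct $u_1, u_2 \in U$. Since $Y$ meets both $M^{u_1}_k$ and $M^{u_2}_k$ and is independent in $H_k$, the vertices $u_1, u_2$ are non-adjacent in $H$. Then $(S \setminus \{v\}) \cup \{u_1, u_2\}$ is an independent set of $H$ — each $u_i$'s unique neighbor in $S$ is $v$, which has been removed — so this is a legitimate $2$-swap improvement to $S$, contradicting the $2$-maximality of $S$ via \Cref{obs:2-maximal}.

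The step I expect to require the most care is the pinpointing argument that collapses $X$ to a single block $M^v_k$ and $U$ to a subset of $N_H(v)$; getting this right is exactly what reduces a potentially wide $k$-swap in $H_k$ to a $2$-swap in $H$, and it is where the hypothesis $|X| \le k-1 = |M^w_k|$ plays its essential role. Everything else — verifying the resulting independent set in $H$ and applying \Cref{obs:2-maximal} — is routine bookkeeping.
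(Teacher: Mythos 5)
Your proof is correct and takes essentially the same route as the paper's: both arguments show that the removed set $X$ must be exactly one block $M^v_k$ (since $|X|\le k-1$ can swallow at most one block, yet every added vertex forces the full block of each of its $S$-neighbors into $X$), and that $|Y|\ge k$ then forces $Y$ to meet two distinct, mutually non-adjacent blocks $M^{u_1}_k, M^{u_2}_k$ with $u_1,u_2\in N_H(v)$, yielding the $2$-swap $(S\setminus\{v\})\cup\{u_1,u_2\}$ in $H$. If anything, your pinpointing of $X=M^v_k$ via the maximality of $S$ in $H$ is slightly more explicit than the paper's ``we can assume $X=M^v_k$'' step, and the contradiction follows directly from the definition of $2$-maximality without needing \cref{obs:2-maximal}.
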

\begin{proof}
Suppose that $S_k$ is a maximal independent set but is not a $k$-maximal independent set of $H_k$.
By~\Cref{lem:npc:k-ind:maximal}, either $M^v_k \subseteq S_k$ or $S_k \cap M^v_k = \emptyset$ for any $v \in V(H)$.
Then, there are at most $k - 1$ vertices $X \subseteq S_k$ and at least $|X| + 1$ vertices $Y \subseteq V(H_k) \setminus S_k$ such that $(S_k \setminus X) \cup Y$ is an independent set of $H_k$.
Since $S_k$ is maximal, $X$ is nonempty.
Let $v \in V(H)$ with $X \cap M^v_k \neq \emptyset$.
Since $M^v_k \subseteq S_k$ and $|M^v_k| = k - 1$, we can assume that $X = M^v_k$.
Similarly to \cref{obs:2-maximal}, each vertex in $Y$ is adjacent to a vertex in $M^v_k$ as $S_k$ is maximal.
As $|Y| \ge k$, $Y$ has at least one vertex in $M^u_k$ and at least one vertex in $M^w_k$ such that $u$ and $w$ are both adjacent to $v$ in $H$.
Then, $(S \setminus \{v\}) \cup \{u, w\}$ is an independent set of $H$, yielding that $S$ is not 2-maximal.
\end{proof}

By~\cref{lem:npc:forward,lem:npc:k-ind:backward}, we have the following.
\begin{theorem}\label{cor:npc:k-ind}
    For any fixed $k \ge 2$, it is NP-complete to determine whether an input graph has at least two $k$-maximal independent sets.
\end{theorem}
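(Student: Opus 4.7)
The plan has two parts: membership in NP and NP-hardness. Membership is immediate: a certificate consists of two distinct vertex subsets $S_1, S_2 \subseteq V(G)$, and verifying $k$-maximality of each $S_i$ reduces to checking, for every $X \subseteq S_i$ with $|X| \le k-1$ and every $Y \subseteq V(G) \setminus S_i$ with $|Y| = |X|+1$, that $(S_i \setminus X) \cup Y$ is not an independent set. Restricting to $|Y| = |X|+1$ loses nothing, since any larger violating $Y$ contains such a subset that still violates independence. For fixed $k$, this is a polynomial number of checks.

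For NP-hardness, the plan is to chain together the constructions already prepared in the excerpt, starting from an instance $(G, X)$ of \textsc{Maximal Independent Set Extension}. First build the auxiliary graph $H$ of \Cref{thm:nph:main}, and then form $H_k$ by blowing each $v \in V(H)$ up into an independent set $M^v_k$ of $k-1$ copies, joining $M^v_k$ and $M^w_k$ by a complete bipartite graph exactly when $vw \in E(H)$. The whole construction is polynomial in $|V(G)|$ for fixed $k$.

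The heart of the argument is to combine the three structural lemmas just established. \Cref{lem:npc:k-ind:maximal} yields a bijection between maximal independent sets of $H$ and of $H_k$ via $S \leftrightarrow S_k = \bigcup_{v \in S} M^v_k$, together with the crucial homogeneity property that every maximal independent set of $H_k$ contains either all of $M^v_k$ or none of it. \Cref{lem:npc:k-ind:forward} and \Cref{lem:npc:k-ind:backward} then show that, under this bijection, the $k$-maximal independent sets of $H_k$ correspond exactly to the $2$-maximal independent sets of $H$. Hence $H_k$ has at least two $k$-maximal independent sets if and only if $H$ has at least two $2$-maximal independent sets, which by the analysis in \Cref{thm:nph:main} holds if and only if $(G, X)$ is a yes-instance of \textsc{Maximal Independent Set Extension}. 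Combined with the known NP-completeness of the latter, this yields the claim.

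The one step requiring genuine thought is the backward direction in \Cref{lem:npc:k-ind:backward}: a hypothetical $k$-improvement in $H_k$ could a priori delete vertices from several blobs simultaneously, and one must reduce this to a mere $2$-improvement in $H$. This is what the homogeneity property is for---it forces the removed set $X$ to be an entire blob $M^v_k$ of size $k-1$---after which a pigeonhole argument on $|Y| \ge k$ extracts two distinct blobs $M^u_k, M^w_k$ meeting $Y$ with $u, w \in N_H(v)$, giving the desired $2$-improvement $(S \setminus \{v\}) \cup \{u, w\}$ in $H$. Once that lemma is in hand, the theorem is essentially bookkeeping.
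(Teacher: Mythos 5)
Your proposal is correct and follows essentially the same route as the paper: the same blow-up construction $H_k$, the same three structural lemmas establishing a bijection between $2$-maximal independent sets of $H$ and $k$-maximal independent sets of $H_k$, and the same chaining back to \textsc{Maximal Independent Set Extension} via \Cref{thm:nph:main}. Your explicit NP-membership argument (restricting to $|Y|=|X|+1$) is a valid detail the paper leaves implicit.
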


As an immediate corollary of \Cref{thm:nph:main,cor:npc:k-ind}, we have the NP-hardness of finding multiple local optima for the minimum vertex cover and the maximum clique problems.
Let $G$ be a graph.
A \emph{clique} is a pairwise adjacent vertex set in $G$.
A \emph{vertex cover} of $G$ is a vertex set such that its complement is an independent set of $G$.
A clique $S$ is \emph{$k$-maximal} if for $X \subseteq S$ with $|X| \le k - 1$ and $Y \subseteq V(G)\setminus S$ with $|Y| > |X|$, $(S \cup Y) \setminus X$ is not a clique in $G$.
Similarly, a vertex cover $S$ is \emph{$k$-minimal} if for $X \subseteq V(G) \setminus S$ with $|X| \le k - 1$ and $Y \subseteq S$ with $|Y| > |X|$, $(S \setminus Y) \cup X$ is not a vertex cover of $G$.
It is easy to see that a vertex set $S$ is a $k$-maximal independent set of $G$ if and only if it is a $k$-maximal clique in the complement graph of $G$.
Moreover, $S$ is a $k$-maximal independent set of $G$ if and only if $V(G) \setminus S$ is a $k$-minimal vertex cover of $G$.
Hence, the following corollary is immediate.
\begin{corollary}
    For any fixed $k \ge 2$, it is NP-complete to determine whether an input graph has at least two $k$-maximal cliques.
    Moreover, it is NP-complete to determine whether an input graph has at least two $k$-minimal vertex covers.
\end{corollary}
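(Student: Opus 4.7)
The plan is to derive this corollary directly from \Cref{cor:npc:k-ind} by exhibiting bijections between $k$-maximal independent sets of a graph $G$ and $k$-maximal cliques (respectively $k$-minimal vertex covers) of an associated graph, so that having at least two of the target objects corresponds to having at least two $k$-maximal independent sets.

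For the clique case, I would take an arbitrary input graph $G$ for the $k$-maximal independent set problem, form its complement $\overline{G}$ in polynomial time, and verify the equivalence: a vertex set $S \subseteq V(G)$ is independent in $G$ iff it is a clique in $\overline{G}$, and the condition ``$(S \cup Y) \setminus X$ is an independent set of $G$'' translates literally to ``$(S \cup Y) \setminus X$ is a clique of $\overline{G}$'' under the same vertex sets. Hence $S$ is $k$-maximal independent in $G$ iff $S$ is $k$-maximal clique in $\overline{G}$, which means $G$ has at least two $k$-maximal independent sets iff $\overline{G}$ has at least two $k$-maximal cliques. Applying \Cref{cor:npc:k-ind} yields NP-hardness of the clique problem.

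For the vertex cover case, I would use the standard complementation $S \mapsto V(G) \setminus S$ within the same graph $G$. The fact that $S$ is independent iff $V(G) \setminus S$ is a vertex cover immediately lifts to the local versions: if we write $S' = V(G) \setminus S$ and consider $X \subseteq S$, $Y \subseteq V(G) \setminus S$, then $(S \cup Y) \setminus X$ and $(S' \setminus Y) \cup X$ are also complements of each other, so the non-existence of a violating pair on the independent-set side is equivalent to the non-existence of a violating pair (with the roles of $X$ and $Y$ swapped) on the vertex-cover side. The map $S \mapsto V(G) \setminus S$ is therefore a bijection between $k$-maximal independent sets of $G$ and $k$-minimal vertex covers of $G$, and the hardness transfers.

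For NP membership in both cases, I would note that for any fixed $k$, verifying $k$-maximality (resp.\ $k$-minimality) of a given set requires checking only $|V(G)|^{O(k)}$ candidate swaps, which is polynomial. Thus the certificate consists of two distinct such sets, each verifiable in polynomial time. No step here is a real obstacle: the only thing to be careful about is matching up the quantifier structure in the definitions of $k$-maximal clique and $k$-minimal vertex cover so that the bijections above preserve the local-optimality condition exactly, and the text already phrases these definitions in the symmetric way that makes this transparent.
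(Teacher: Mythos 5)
Your proposal is correct and matches the paper's argument exactly: the paper also passes to the complement graph for $k$-maximal cliques and to the set complement $S \mapsto V(G)\setminus S$ for $k$-minimal vertex covers, observing that both maps preserve the local-optimality condition so that \Cref{cor:npc:k-ind} transfers immediately. Your additional remarks on NP membership via checking $|V(G)|^{O(k)}$ swaps are fine and consistent with what the paper leaves implicit.
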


Similarly, a dominating set $D$ of $G$ is \emph{$k$-minimal} if for $X \subseteq V(G) \setminus D$ with $|X| \le k - 1$ and $Y \subseteq D$ with $|Y| > |X|$, $(D \setminus Y) \cup X$ is not a dominating set of $G$.
A slightly modified version of a well-known reduction from \textsc{Vertex Cover} to \textsc{Dominating Set} proves the hardness of finding multiple $k$-minimal dominating sets.

\begin{theorem}\label{thm:npc:dom}
    For any fixed $k \ge 2$, it is NP-complete to determine whether an input graph has at least two $k$-minimal dominating sets.
\end{theorem}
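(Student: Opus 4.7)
The plan is to reduce from the problem of deciding whether an input graph has at least two $k$-minimal vertex covers, which is NP-complete by the corollary following \Cref{cor:npc:k-ind}. The reduction is a slight variant of the classical polynomial-time reduction from \textsc{Vertex Cover} to \textsc{Dominating Set}: given $G$ (WLOG without isolated vertices), I would construct $G'$ by adding, for each edge $e = uv \in E(G)$, a set of $k$ pairwise non-adjacent ``twin'' vertices $w^{(1)}_e, \dots, w^{(k)}_e$, each adjacent only to $u$ and $v$. The classical reduction uses a single edge-vertex (the case $k=1$); introducing $k$ twins is the ``slight modification'' needed to align with $k$-minimality.

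The crux is a structural lemma asserting that no minimal dominating set of $G'$ contains any $w^{(i)}_e$: if $w^{(i)}_{uv} \in D$ for a minimal dominating set $D$, then $w^{(i)}_{uv}$ must have a private neighbor in $\{u, v, w^{(i)}_{uv}\}$, and in each of the three cases the remaining twins $w^{(j)}_{uv}$ with $j \ne i$ still need to be dominated, which forces another vertex of $\{u, v\}$ into $D$ and contradicts the private-neighbor status. Consequently, every $k$-minimal dominating set $D$ of $G'$ satisfies $D \subseteq V(G)$, and because each $w^{(i)}_e$ must be dominated, $D$ contains at least one of $u, v$ for every edge $uv \in E(G)$, i.e., $D$ is a vertex cover of $G$.

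To complete the bijection, I would verify: (i) every $k$-minimal vertex cover $C$ of $G$ is a $k$-minimal dominating set of $G'$, since any swap $(C \setminus Y) \cup X$ with $|X| \le k-1$ and $|Y| > |X|$ that yields a dominating set of $G'$ must itself be a vertex cover of $G$---otherwise some edge $uv$ is left uncovered, forcing all $k$ twins $w^{(i)}_{uv}$ into $X$ and contradicting $|X| \le k-1$---so restricting to $X \cap V(G)$ produces a swap contradicting the $k$-minimality of $C$ as a vertex cover; and (ii) conversely, any vertex-cover swap of a $k$-minimal dominating set $D \subseteq V(G)$ is immediately a dominating-set swap in $G'$, because vertex covers of $G$ are dominating sets of $G'$. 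Hence $G'$ has at least two $k$-minimal dominating sets if and only if $G$ has at least two $k$-minimal vertex covers, and the reduction runs in polynomial time for fixed $k$.

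The main obstacle will be the private-neighbor analysis used in the structural lemma, and in particular calibrating the number of twins per edge to exactly $k$: with fewer twins, a small ``gadget swap'' of size at most $k-1$ could insert all the twins for some edge $uv$ in order to re-cover it without requiring $u$ or $v$, breaking the $k$-minimality correspondence; with $k$ or more twins, such a swap is forbidden by the cardinality bound $|X| \le k-1$, which is exactly what the reduction needs.
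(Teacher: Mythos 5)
Your reduction is the same as the paper's (the paper also attaches $k$ degree-two vertices to each edge $\{u,v\}$ and reduces from the existence of two $k$-minimal vertex covers), and your verification that the bijection preserves $k$-minimality in both directions is sound --- indeed more detailed than the paper's. The gap is in the structural lemma. As stated, ``no \emph{minimal} dominating set of $G'$ contains any $w^{(i)}_e$'' is false, and your private-neighbor case analysis breaks precisely in the case where the private neighbor of $w^{(i)}_{uv}$ is $w^{(i)}_{uv}$ itself. That case only yields $u,v\notin D$, and then the remaining twins $w^{(j)}_{uv}$ can be dominated by lying in $D$ themselves; nothing forces $u$ or $v$ into $D$. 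Concretely, if $G$ is a single edge $\{u,v\}$, then $D=\{w^{(1)}_{uv},\dots,w^{(k)}_{uv}\}$ is an inclusion-wise minimal dominating set of $G'$ in which every twin is its own private neighbor. So plain minimality cannot give you $D\subseteq V(G)$.

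The offending configuration is killed only by $k$-minimality: if all $k$ twins of $e=\{u,v\}$ lie in $D$, then $\bigl(D\setminus\{w^{(1)}_{uv},\dots,w^{(k)}_{uv}\}\bigr)\cup\{u\}$ is again a dominating set of $G'$, and this swap has $|X|=1\le k-1$ and $|Y|=k>|X|$ since $k\ge 2$, contradicting $k$-minimality. This is exactly the cardinality argument you already deploy for direction (i), applied in the opposite direction (remove all $k$ twins, insert one endpoint), and it is how the paper argues. The remaining case --- some $w^{(i)}_{uv}\in D$ while some $w^{(j)}_{uv}\notin D$ --- does follow from plain minimality: dominating $w^{(j)}_{uv}$ forces $u$ or $v$ into $D$, whence $w^{(i)}_{uv}$ has no private neighbor and $D\setminus\{w^{(i)}_{uv}\}$ still dominates. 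With the lemma restated for $k$-minimal (rather than minimal) dominating sets and proved by this two-case split, the rest of your argument goes through.
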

\begin{proof}
    Let $G$ be a graph without isolated vertices.
    For each two adjacent vertices $u, v \in V(G)$ with edge $e = \{u, v\}$, we add $k$ vertices $x^e_1, \dots, x^e_k$ and edges $\{u, x^e_i\}, \{x^e_i, v\}$, forming a path of length~2 between $u$ and $v$ for each $1 \le i \le k$.
    The graph constructed in this way is denoted by $H$.
    Now, we claim that $G$ has at least two $k$-minimal vertex covers if and only if $H$ has at least two $k$-minimal dominating sets.

    We first observe that for every $k$-minimal dominating set $S$ of $H$, it holds that $S \cap \{x^e_1, \dots, x^e_k\} = \emptyset$ for each $e \in E(G)$.
    To see this, suppose that $\{x^e_1, \dots, x^e_k\} \subseteq S$ for some edge $e = \{u, v\}$.
    Then, $(S \setminus \{x^e_1, \dots, x^e_k\}) \cup \{u\}$ is a dominating set of $H$, contradicting the $k$-minimality of $S$.
    Suppose next that $x^e_i \in S$ and $x^e_j \notin S$ for some $1 \le i, j \le k$.
    Then, $S$ must contain at least one of $u$ and $v$ as otherwise $x^e_j$ is not dominated.
    This implies that $S \setminus \{x^e_i\}$ is still a dominating set of $H$, which contradicts the minimality of $S$.
    Thus, we have $S \cap \{x^e_1, \dots, x^e_k\} = \emptyset$ for all $e \in E(G)$.
    
    It is easy to verify that for every vertex set $S \subseteq V(G)$, $S$ is a vertex cover of $G$ if and only if it is a dominating set of $H$. This bijection proves the claim, as $G$ has at least two $k$-minimal vertex covers if and only if $H$ has at least two $k$-minimal dominating sets.
\end{proof}

Finally, we show that finding $k$-minimal feedback vertex sets is hard.
A \emph{feedback vertex set} of a graph $G$ is a vertex subset $X \subseteq V(G)$ whose removal makes the graph acyclic (i.e., a forest).
The $k$-minimality for feedback vertex sets is defined analogously.

\begin{theorem}\label{thm:npc:fvs}
    For any fixed $k \ge 2$, it is NP-complete to determine whether an input graph has at least two $k$-minimal feedback vertex sets.
\end{theorem}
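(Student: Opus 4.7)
The plan is to reduce from the problem of deciding whether a graph has at least two $k$-minimal vertex covers, whose NP-completeness is the corollary stated just before \Cref{thm:npc:dom}. Following the style of that theorem, I would build $H$ from a graph $G$ (without isolated vertices) by attaching to each edge $e = \{u, v\} \in E(G)$ a bundle of $k$ new apex vertices $x^e_1, \ldots, x^e_k$, each adjacent to both $u$ and $v$, while retaining the original edge $\{u, v\}$. The cycles of $H$ then arise precisely from edges of $G$ together with their apex bundles; in particular, each edge $e$ yields $k$ triangles $\{u, v, x^e_i\}$ sharing the common base $\{u, v\}$. Membership in NP is straightforward, since $k$-minimality of a candidate FVS can be certified by inspecting all $O(n^k)$ small exchanges.

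The first key step is to show that every $k$-minimal FVS $S$ of $H$ satisfies $S \subseteq V(G)$ and is in fact a vertex cover of $G$. Suppose some $x^e_i \in S$. If $u \in S$ or $v \in S$, that endpoint alone breaks every triangle involving $e$, so $S \setminus \{x^e_i\}$ remains an FVS and $k$-minimality is violated with $|X|=0, |Y|=1$. Otherwise $u, v \notin S$, and then each triangle $u - v - x^e_j$ forces $x^e_j \in S$; replacing all $k$ apex vertices at $e$ by $u$ yields a valid FVS with $|X| = 1 \le k-1$ and $|Y| = k > |X|$, again violating $k$-minimality (using $k \ge 2$). Hence $S$ contains no apex vertex, and the surviving triangles then force $\{u, v\} \cap S \ne \emptyset$ for every edge $\{u, v\} \in E(G)$. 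Conversely, any vertex cover $T$ of $G$ is an FVS of $H$, since $H - T$ leaves each apex bundle as a star attached to the surviving endpoint, creating no cycle.

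It remains to verify that $k$-minimality transfers in both directions, giving a bijection between $k$-minimal vertex covers of $G$ and $k$-minimal FVSs of $H$. The easy direction uses the fact that a VC-improvement in $G$ is automatically an FVS-improvement in $H$: if a $k$-minimal FVS $T$ of $H$ admitted a VC-improvement in $G$, the same exchange would improve $T$ in $H$, a contradiction. The main obstacle is the reverse direction, where one must rule out FVS-improvements in $H$ that exploit apex vertices. Starting from a $k$-minimal VC $S$ of $G$ and a hypothetical FVS-improvement $(X, Y)$ in $H$, I would decompose $X = X_G \sqcup X_{\mathrm{aux}}$ according to whether elements lie in $V(G)$ or among the apex vertices, and argue that $X_{\mathrm{aux}} = \emptyset$. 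Indeed, otherwise some edge $e = \{u, v\}$ with $u, v \notin (S \setminus Y) \cup X_G$ would, by the triangle argument above, force all $k$ apex vertices $x^e_1, \ldots, x^e_k$ to lie in $X_{\mathrm{aux}} \subseteq X$, contradicting the budget $|X| \le k-1$. Hence $(X_G, Y)$ is itself a genuine VC-improvement of $S$ with $|X_G| \le |X| \le k-1$ and $|Y| > |X| \ge |X_G|$, contradicting the $k$-minimality of $S$ as a VC. The quantitative balance of $k$ apex vertices per edge against the budget $|X| \le k-1$ is precisely what makes this argument go through, and it is what determines the choice of exactly $k$ apex vertices per edge.
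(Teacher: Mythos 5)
Your proposal is correct and follows essentially the same route as the paper: the same gadget (a bundle of $k$ degree-two apex vertices on each edge of $G$, each forming a triangle with that edge), the same key observation that a $k$-minimal feedback vertex set can contain no apex vertex (single-deletion if an endpoint is already present, and the $k$-versus-$(k-1)$ budget argument otherwise), and the same resulting bijection with $k$-minimal vertex covers. Your write-up merely spells out the transfer of $k$-minimality in both directions, which the paper dismisses as ``easy to verify.''
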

\begin{proof}
    The proof is almost identical to \Cref{thm:npc:dom}.
    Let $H$ be the graph constructed in the proof of \Cref{thm:npc:dom}.

    We observe that for every $k$-minimal feedback vertex set $S$ of $H$, it holds that $S \cap \{x^e_1, \dots, x^e_k\} = \emptyset$ for every $e \in E(G)$.
    To see this, suppose that $\{x^e_1, \dots, x^e_k\} \subseteq S$ for some $e = \{u, v\} \in E(G)$.
    Since all the cycles passing through $x^e_i$ in $H$ must contain both $u$ and $v$.
    This implies that $(S \setminus \{x^e_1, \dots, x^e_k\}) \cup \{u\}$ is a feedback vertex set of $H$.
    Suppose moreover that $x^e_i \in S$ and $x^e_j \notin S$ for some $i$ and $j$.
    Then, at least one of $u$ and $v$ are contained in $S$ as otherwise the 3-cycle formed by $u, v, x^e_i$ is not hit by $S$.
    Thus, $S \setminus \{x^e_i\}$ is a feedback vertex set of $H$.
    In any case, it contradicts the assumption that $S$ is $k$-minimal.

    Again, it is easy to verify that for every vertex set $S \subseteq V(G)$, $S$ is a vertex cover of $G$ if and only if it is a feedback vertex set of $H$.
    Thus, the theorem follows.
\end{proof}

\subsection{Polynomial-time algorithm for $k$-maximal matchings}
To complement \Cref{thm:nph:main}, we investigate the tractability of finding multiple local optima for a special case of the local search version of \textsc{Maximum Independent Set}.

Let $G$ be a graph.
A set of edges $M \subseteq E(G)$ is called a \emph{matching} of $G$ if every pair of edges in $M$ does not share their end vertices.
For an integer $k \ge 1$, a matching $M$ is said to be \emph{$k$-maximal} if for $X \subseteq M$ with $|X| \le k - 1$ and $Y \subseteq E(G) \setminus X$ with $|Y| > |X|$, $(M \cup Y) \setminus X$ is not a matching of $G$.
In this section, we give a polynomial-time algorithm for deciding whether an input graph $G$ has at least two $k$-maximal matchings for every fixed $k$.

Let $M$ be a matching of $G$.
In the following, we may not distinguish between an edge set of $G$ and the subgraph induced by them.
We say that a vertex $v$ is \emph{matched} in $M$ if there is an edge incident to $v$ in $M$.
A path $P = (v_1, \dots, v_\ell)$ is said to be \emph{$M$-alternating} if $v_1$ is not matched in $M$ and $\{v_i, v_{i+1}\} \in M$ for even~$i$.
An $M$-alternating path $P = (v_1, \dots, v_\ell)$ is said to be \emph{$M$-augmenting} if $v_\ell$ is not matched in $M$, that is, $\ell$ is even.
It is well known that a matching $M$ is maximum if and only if it has no $M$-augmenting paths (e.g., \cite{korte2018combinatorial}).
In particular, for a matching $M'$ with $|M'| < |M|$, there is a path component in $M' \symdif M$ that is $M'$-augmenting, as $M' \symdif M$ consists of a disjoint union of paths and cycles.

\begin{lemma}\label{lem:k-mm:char}
    Let $M$ be a matching in $G$ and let $k$ be a positive integer.
    Then, $M$ is $k$-maximal if and only if $G$ has no $M$-augmenting path of length at most $2k - 1$.
\end{lemma}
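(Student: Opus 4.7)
The plan is to prove both directions via the standard symmetric-difference analysis familiar from matching theory, formalized so the path-length bound translates exactly to the $k$-maximality threshold.

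For the easy direction ($\Rightarrow$ of the contrapositive), suppose $G$ has an $M$-augmenting path $P$ of length at most $2k-1$. Since $P$ is $M$-augmenting, its length is odd, say $2j+1$ with $j \le k-1$, and exactly $j$ of its edges lie in $M$ while $j+1$ lie outside. Setting $X \coloneqq E(P) \cap M$ and $Y \coloneqq E(P) \setminus M$, we have $|X| = j \le k-1$ and $|Y| = j+1 > |X|$, and it is standard that $(M \setminus X) \cup Y = M \symdif E(P)$ is again a matching. Hence $M$ is not $k$-maximal.

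For the converse direction, I would argue by contraposition. Suppose $M$ is not $k$-maximal, witnessed by $X \subseteq M$ and $Y \subseteq E(G) \setminus X$ with $|X| \le k-1$, $|Y| \ge |X|+1$, and $M' \coloneqq (M \cup Y) \setminus X$ a matching. Without loss of generality we may assume $Y \cap M = \emptyset$ (edges in $Y \cap M$ may be deleted from $Y$ without changing $M'$, only tightening the inequality $|Y| > |X|$). Then $M \symdif M' = X \cup Y$ has total size at most $2|X|+1 \le 2k-1$. As usual, $M \symdif M'$ decomposes into edge-disjoint paths and cycles in which $M$- and $M'$-edges alternate; cycles and even-length paths contribute equally to $|M|$ and $|M'|$, so the surplus $|M'| - |M| > 0$ must come from odd-length path components, at least one of which is $M$-augmenting. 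Any such component has length at most $|X \cup Y| \le 2k-1$, giving the desired short $M$-augmenting path.

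The argument is entirely routine once framed this way; the only thing to be careful about is the accounting $|X|+|Y| \le 2(k-1)+1 = 2k-1$, which is exactly what makes the length bound $2k-1$ (and not $2k$ or $2k-2$) come out right, matching the parity constraint that $M$-augmenting paths have odd length. No non-standard tool is needed beyond the classical structure of $M \symdif M'$.
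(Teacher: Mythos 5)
Your proof is correct and follows essentially the same route as the paper's: the forward direction by taking the symmetric difference with the short augmenting path, and the converse by decomposing $M \symdif M'$ into alternating paths and cycles and extracting an $M$-augmenting path component. The only small point to tidy is the bound $|X \cup Y| \le 2|X|+1$, which as written needs you to first shrink $Y$ to size exactly $|X|+1$ (always possible, since any subset of a matching is a matching) --- or, alternatively, bound the augmenting component directly by noting that its $M$-edges all lie in $X$ and alternate with its non-$M$-edges, so its length is at most $2|X|+1 \le 2k-1$ regardless of $|Y|$.
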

\begin{proof}
    Suppose that $G$ has an $M$-augmenting path of length at most $2k - 1$.
    Then, $M \symdif P$ is a matching with $|M \symdif P| > |M|$ such that $|M \cap P| \le k - 1$ and $|P \setminus M| > |M \cap P|$.
    This implies that $M$ is not $k$-maximal.

    Suppose that $M$ is not $k$-maximal.
    Then, there are $X \subseteq M$ and $Y \subseteq G \setminus M$ with $|X| = |Y| + 1 \le k$ such that $(M \setminus X) \cup Y$ is a matching of $G$.
    Let $M' = (M \setminus X) \cup Y$.
    As $|M'| > |M|$, there is a path component $P$ in $M \symdif M'$ that is $M$-augmenting.
    Moreover, each component in $H$ has at most $|X| + |Y| \le 2k-1$ edges.
    Hence, $P$ is an $M$-augmenting path of length at most $2k - 1$.
\end{proof}

Given a graph $G$, it is easy to determine if $G$ has at least two maximum matchings by using a polynomial-time algorithm for computing a maximum matching (see \cite{GabowKT01} for a more sophisticated algorithm).
Obviously, we can determine if $G$ has at least two $k$-maximal matchings in polynomial time when there are at least two maximum matchings in $G$.
Hence, in the following, we assume otherwise that $G$ has a unique maximum matching $M^*$.
A connected component in a graph is \emph{nontrivial} if it has at least one edge.

\begin{lemma}\label{lem:k-mm:almost-max}
    Let $M^*$ be a unique maximum matching of $G$.
    Suppose that $G$ has a $k$-maximal matching other than $M^*$.
    Then, $G$ has a $k$-maximal matching $M$ of size $|M^*| - 1$.
    Moreover, there is exactly one nontrivial component in $M \symdif M^*$, which is an $M$-augmenting path.
\end{lemma}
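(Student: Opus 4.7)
The plan is to construct $M$ as $M^* \symdif P$ for a carefully chosen augmenting path~$P$. Starting from any $k$-maximal matching $M^\dagger \ne M^*$ (which exists by hypothesis), uniqueness of the maximum matching forces $|M^\dagger| < |M^*|$. The symmetric difference $M^\dagger \symdif M^*$ is a disjoint union of paths and cycles whose edges alternate between $M^\dagger$ and $M^*$; since $M^*$ is maximum, no component can be an $M^*$-augmenting path (otherwise $M^*$ could be enlarged by flipping it), so the surplus $|M^*| - |M^\dagger| \ge 1$ must come from at least one $M^\dagger$-augmenting path $P$ contained in $M^\dagger \symdif M^*$. Because $M^\dagger$ is $k$-maximal, \Cref{lem:k-mm:char} gives $|P| \ge 2k+1$.

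Next, I would set $M \coloneqq M^* \symdif P$. Writing the length of $P$ as $2\ell+1$, with $\ell+1$ edges in $M^*$ and $\ell$ edges in $M^\dagger$, one immediately obtains $|M| = |M^*| - 1$. Since $M = M^* \symdif P$, also $M \symdif M^* = P$, so $P$ is the unique nontrivial component of $M \symdif M^*$. To see that $P$ is $M$-augmenting, observe that its two endpoints had their only $M^*$-edges removed (these were the two end-edges of $P$) and are therefore unmatched in $M$, whereas the interior vertices of $P$ remain matched along $P$ by the $M^\dagger$-edges now present in $M$.

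The main step is to verify that $M$ is $k$-maximal, and this is where the uniqueness of $M^*$ does the heavy lifting. By \Cref{lem:k-mm:char} it suffices to rule out $M$-augmenting paths of length at most $2k-1$. Suppose $Q$ were such a path. Then $M \symdif Q$ is a matching of size $|M| + 1 = |M^*|$, so by uniqueness $M \symdif Q = M^*$, equivalently $Q = M \symdif M^* = P$. This contradicts $|Q| \le 2k-1 < 2k+1 \le |P|$. The only place I would double-check carefully is the routine fact that $M^* \symdif P$ is indeed a matching; this is unproblematic because $P$ is a single path component of $M^\dagger \symdif M^*$, so replacing the $M^*$-edges of $P$ by the $M^\dagger$-edges of $P$ does not interfere with any $M^*$-edge outside $V(P)$. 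The entire argument rests on the observation that uniqueness of $M^*$ collapses the space of size-$|M^*|$ matchings to a single element, which is the true engine of the proof.
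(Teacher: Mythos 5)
Your proposal is correct and follows essentially the same route as the paper's proof: extract an augmenting path $P$ from the symmetric difference of $M^*$ with an arbitrary non-maximum $k$-maximal matching, note $|P| \ge 2k+1$ via \cref{lem:k-mm:char}, set $M = M^* \symdif P$, and use the uniqueness of $M^*$ to rule out short $M$-augmenting paths. Your uniform treatment (always flipping $P$ on $M^*$, so that $M \symdif M^* = P$ by construction) even makes the ``exactly one nontrivial component'' claim slightly more transparent than the paper's case split.
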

\begin{proof}
    Let $M$ be a $k$-maximal matching of $G$ with $M \neq M^*$.
    We assume that $|M| < |M^*|-1$ as otherwise we are done.
    Let $P^*$ be an $M$-augmenting path component in $M \symdif M^*$.
    Since $P^*$ is an $M$-augmenting path in $G$, by~\Cref{lem:k-mm:char}, $P^*$ contains more than $2k - 1$ edges.
    Let $M' = M^* \symdif P^*$.
    Clearly, $P^*$ is an $M'$-augmenting path, and $M'$ is a matching with $|M'| = |M^*| - 1$.
    Moreover, there is no $M'$-augmenting path other than $P^*$, as otherwise such an $M'$-augmenting path $P$ would imply a maximum matching $M' \symdif P$ distinct from $M^*$.
    Again, by~\Cref{lem:k-mm:char}, $M'$ is $k$-maximal.
\end{proof}

By~\Cref{lem:k-mm:almost-max}, it suffices to find a $k$-maximal matching $M$ with size $|M^*| - 1$.
Suppose that the $M$-augmenting path $P^*$ in the proof of \Cref{lem:k-mm:almost-max} has $2k + 1$ edges.
(Recall that each augmenting path has an odd number of edges.)
In this case, we try to check all the possibilities of paths $P$ with $|E(P)| = 2k + 1$ such that $M^* \symdif P$ is a $k$-maximal matching of $G$.
This can be done in polynomial time, as $k$ is fixed.
Suppose otherwise that $P^*$ has at least $2k + 3$ edges.
In this case, we claim that the above algorithm also finds a $k$-maximal matching of $G$ if it exists.
To see this, let $P$ be a subpath of $P^*$ containing one of the end vertices of $P^*$ such that $|E(P)| = 2k + 1$.
Then, $M \coloneqq M^* \symdif P$ is a matching of $G$ with $|M| = |M^*| - 1$.
Moreover, $P$ is a unique $M$-augmenting path due to the uniqueness of $M^*$.
Therefore, $M$ is $k$-maximal.

\begin{theorem}\label{thm:k-MM}
    For each integer $k \ge 1$, there is an $n^{O(k)}$-time algorithm for determining whether an input $n$-vertex graph has at least two $k$-maximal matchings.
\end{theorem}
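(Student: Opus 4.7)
The plan is to exploit Lemmas~\ref{lem:k-mm:char} and~\ref{lem:k-mm:almost-max} to reduce the problem to enumerating short paths. First, I would compute a maximum matching $M^*$ in polynomial time and test whether it is unique (e.g., by searching for an $M^*$-alternating cycle in $G$, whose existence is equivalent to the existence of a second maximum matching). If $M^*$ is not unique, then two distinct maximum matchings are both $k$-maximal by Lemma~\ref{lem:k-mm:char}, and the algorithm answers yes.

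Assume now that $M^*$ is unique. By Lemma~\ref{lem:k-mm:almost-max}, a second $k$-maximal matching exists if and only if some matching of the form $M_P \coloneqq M^* \symdif E(P)$ is $k$-maximal, where $P$ is a path (with respect to which $M^* \symdif E(P)$ is a matching) of length at least $2k+1$; the lower bound on $|E(P)|$ follows from Lemma~\ref{lem:k-mm:char} applied to $M_P$. The algorithm enumerates every path $P$ in $G$ with exactly $2k+1$ edges---there are at most $n \cdot (n-1)^{2k+1} = n^{O(k)}$ such paths---and for each $P$ checks in polynomial time whether $M_P$ is a matching and whether it is $k$-maximal via Lemma~\ref{lem:k-mm:char} (i.e., whether $G$ admits any $M_P$-augmenting path of length at most $2k-1$, which is in turn verifiable in $n^{O(k)}$ time by trying all short alternating walks). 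It outputs yes if and only if some $P$ yields a $k$-maximal $M_P$ distinct from $M^*$.

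The main obstacle is the completeness of this restricted search: why must some enumerated $P$ of length exactly $2k+1$ witness a second $k$-maximal matching when one exists? Given such a matching $M$, Lemma~\ref{lem:k-mm:almost-max} supplies an $M$-augmenting path $P^*$ of length at least $2k+1$ in $M \symdif M^*$. If $|E(P^*)| = 2k+1$, the path $P^*$ itself is enumerated and $M = M^* \symdif E(P^*)$ is verified. If $|E(P^*)| \ge 2k+3$, I would take $P$ to be the subpath of $P^*$ starting at one of its endpoints and containing exactly $2k+1$ edges. A direct check on the alternation pattern shows that $M_P = M^* \symdif E(P)$ is a matching of size $|M^*|-1$ and that $P$ is $M_P$-augmenting. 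Any $M_P$-augmenting path $Q$ of length at most $2k-1$ would yield a maximum matching $M_P \symdif E(Q)$, which by uniqueness of $M^*$ must equal $M^*$; this forces $E(Q) = E(P)$ and contradicts $|E(Q)| \le 2k-1 < 2k+1$. Hence $M_P$ is $k$-maximal by Lemma~\ref{lem:k-mm:char}, so the algorithm outputs yes. The overall running time is $n^{O(k)}$, dominated by the path enumeration.
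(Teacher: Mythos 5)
Your proposal follows essentially the same route as the paper: handle the non-unique maximum matching case first, then use Lemma~\ref{lem:k-mm:almost-max} to reduce to searching for a $k$-maximal matching of size $|M^*|-1$, enumerate all candidate paths with exactly $2k+1$ edges, and establish completeness by truncating a longer $M$-augmenting path $P^*$ to a length-$(2k+1)$ prefix and invoking the uniqueness of $M^*$ to rule out short augmenting paths for $M^*\symdif P$; your uniqueness argument for that last step is in fact spelled out slightly more explicitly than the paper's. One small factual slip: the existence of a second maximum matching is \emph{not} equivalent to the existence of an $M^*$-alternating cycle (e.g., a path on three vertices has two maximum matchings and no cycle); two maximum matchings may also differ by an even-length alternating path starting at an $M^*$-exposed vertex, so your uniqueness test must also search for such paths---this is still polynomial time and does not affect the rest of the argument.
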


\subsection{Enumerating local optimal solutions via Courcelle's theorem}\label{sec:cw}
For another tractable case, we would like to mention that a variant of Courcelle's theorem allows us to enumerate all $k$-maximal independent sets on bounded-cliquewidth graphs (and hence bounded-treewidth graphs) in polynomial delay for any fixed~$k$.

The property of being a 2-maximal independent set of a graph can be expressed by an MSO$_1$ formula (see \cite{Aoihon} for the syntax of MSO$_1$ formulas).
For a vertex set $X$, let
\begin{align*}
    \varphi(X) &= \texttt{ind}(X) \land \texttt{mxml(X)} \land \texttt{2-mxml}(X)\\
    \texttt{ind}(X) &= \forall_{u, v}(E(u, v) \rightarrow \neg(X(u) \land X(v))) \\
    \texttt{mxml}(X) &= \forall_{u}(\neg X(u) \rightarrow \neg \texttt{ind}(X \cup \{u\}))\\
    \texttt{2-mxml}(X) &= \forall_{u}(X(u) \rightarrow \neg\exists_{v, w}(X(v) \land X(w) \land \texttt{ind}((X \setminus \{u\}) \cup \{v, w\}))),
\end{align*}
where $X(u)$ and $E(u, v)$ are atomic formulas that are true if and only if $u \in X$ and $\{u, v\} \in E(G)$, respectively.\footnote{Precisely speaking, operators $\cup$ and $\setminus$ cannot be used in MSO$_1$ formulas but we can implement them within the syntax of those formulas.}
Then, $G \models \varphi(X)$ if and only if $X$ is a 2-maximal independent set of $G$.
By the enumeration version of Courcelle's theorem~\cite{Bagan06,Courcelle09}, we can enumerate all 2-maximal independent sets of bounded-cliquewidth graphs in polynomial delay, meaning that we can find a polynomial number of them (if they exist) in polynomial time.
Similar consequences for $k$-maximal independent sets, $k$-minimal dominating sets, and $k$-minimal feedback vertex sets are immediately obtained for all fixed~$k$ as those properties can be expressed by formulas in MSO$_1$ as well.

\begin{proposition}\label{prop:cw}
    There are polynomial-time algorithms that decide whether an input graph has at least two $k$-maximal independent sets, $k$-minimal dominating sets, or $k$-minimal feedback vertex sets.
\end{proposition}

\section{Local search versions of \textsc{Max (NAE)SAT} and \textsc{Max Cut}}
In this section, we focus on the local search versions of \textsc{Max SAT} and \textsc{Max Cut} with FLIP neighborhood~\cite{SchafferY91}.
For a CNF formula $\varphi$ and truth assignments $\alpha, \alpha'$ of $\varphi$, we say that $\alpha$ is \emph{adjacent} to $\alpha'$ in the FLIP neighborhood if $\alpha' = \alpha_x$ for some $x \in V(\varphi)$.
Similarly, for a graph $G$ and two cuts $C = \{X, Y\}$ and $C' = \{X', Y'\}$, we say that $C$ is \emph{adjacent} to $C'$ in the FLIP neighborhood if $X' = X \symdif \{v\}$ and $Y' = Y \symdif \{v\}$ for some $v \in V(G)$. 
Under these definitions, a truth assignment of $\varphi$ is locally optimal if it is unflippable and a cut of $G$ is locally optimal if it is stable.
We show that the problems of finding multiple unflippable assignments and multiple stable cuts are NP-hard.
To this end, we start with \textsc{Max NAESAT}.

We first observe that every 3-CNF formula has at least two NAE-unflippable assignments.
To see this, let $\alpha$ be a truth assignment of $\varphi$ that maximizes the number of NAE-satisfied clauses in $\varphi$.
This assignment is indeed NAE-unflippable, as it is a global optimum for \textsc{Max NAESAT}, and its complement $\overline{\alpha}$ is also NAE-unflippable and satisfies the same set of clauses.
The following theorem suggests, however, that finding a third NAE-unflippable assignment is hard.

\begin{theorem}\label{thm:sat:naesat}
    It is NP-complete to determine whether an input 3-CNF formula has at least three NAE-unflippable assignments.
\end{theorem}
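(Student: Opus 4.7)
The plan is to reduce from \textsc{NAE-3SAT}. First, observe that NAE-unflippable assignments always come in complementary pairs, since $\alpha$ and $\overline{\alpha}$ NAE-satisfy exactly the same set of clauses; combined with the observation stated just before the theorem (that a NAE-global-maximum and its complement give one such pair), this reduces the question ``does $\psi$ have at least three NAE-unflippable assignments?'' to ``does $\psi$ admit a second complementary NAE-unflippable pair beyond the global-optimum one?''

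Given a 3-CNF instance $\varphi$ of \textsc{NAE-3SAT} on variables $X$ with clauses $\mathcal{C}$, I would build a 3-CNF $\psi$ on $X$ together with a small set of auxiliary variables $Y$ by taking the clauses of $\varphi$ (with small multiplicity) and attaching a \emph{forcing gadget}: a family of clauses over $Y$ and linking $Y$ to $X$, each repeated with a multiplicity $N$ that is polynomially large in $|\mathcal{C}|$. The gadget has two intended functions. First, there is a canonical partial assignment of $Y$ such that, combined with \emph{any} assignment of $X$, the gadget contribution to the NAE-count is so large that every single-variable flip strictly decreases the NAE-count regardless of the behavior on $\mathcal{C}$; this guarantees an unconditional ``trivial'' NAE-unflippable pair. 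Second, the only other partial assignment of $Y$ (up to complementation) that can appear in a NAE-unflippable assignment of $\psi$ is a specific ``non-canonical'' configuration, and in this non-canonical configuration the gadget's NAE-count is locally flat with respect to flipping any $x_i \in X$, so NAE-unflippability reduces to the requirement that every clause of $\varphi$ is NAE-satisfied.

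The proof then proceeds in two directions. Forward: if $\varphi$ has a NAE-satisfying assignment $\alpha^*$, extend $\alpha^*$ and $\overline{\alpha^*}$ to $Y$ using the non-canonical auxiliary configuration and verify NAE-unflippability by case analysis on single flips: flips of $Y$-variables are blocked by gadget dominance, while flips of $X$-variables are blocked because $\alpha^*$ already NAE-satisfies every clause of $\varphi$ and the gadget is flat on $X$. Together with the canonical pair, this produces at least two pairs, i.e., at least four NAE-unflippable assignments. Converse: if $\psi$ has a NAE-unflippable assignment $\beta$ lying outside the canonical pair, one first analyzes single-variable flips on $Y$ to pin down $\beta|_Y$ to the non-canonical configuration (up to complementation); then analyzing flips of each $x_i \in X$ shows that every clause of $\varphi$ must be NAE-satisfied by $\beta|_X$, exhibiting a witness for $\varphi \in \textsc{NAE-3SAT}$.

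The hard part will be engineering the forcing gadget so that all three properties coexist: (i) unconditional NAE-unflippability of the canonical pair, requiring the gadget to be rigid enough that no single flip can compete with its contribution; (ii) existence of a NAE-unflippable extension for every NAE-satisfying $\alpha^*$, requiring the gadget to be flexible in the non-canonical configuration rather than pinning $X$ down; and (iii) non-existence of a NAE-unflippable extension when $\varphi \notin \textsc{NAE-3SAT}$, requiring the gadget to leave $\varphi$'s NAE-count fully exposed through single-flip moves in the non-canonical configuration. The tension between (i) and (ii)--(iii) must be resolved by carefully tuning the clause multiplicities. Membership in NP is immediate, since one can guess three assignments and check NAE-unflippability of each in polynomial time.
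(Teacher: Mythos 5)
Your overall architecture---a forced ``trivial'' complementary pair plus a second pair whose existence encodes the source instance, using the fact that NAE-unflippable assignments come in complementary pairs---is exactly the paper's strategy, and your forward direction is fine. The genuine gap is in the converse, and it comes from your choice of source problem. You want the gadget to be ``flat'' on the original variables in the non-canonical configuration (property (ii)), so that NAE-unflippability there reduces to the condition that every clause of $\varphi$ is NAE-satisfied. But NAE-unflippability is only a \emph{local} condition: with the gadget flat on $X$, an assignment in the non-canonical configuration is NAE-unflippable precisely when $\beta|_X$ is a local optimum of the NAE-count of $\varphi$ under single flips, not when it is a global one. Every instance of \textsc{Max NAE-3SAT}---including every unsatisfiable one---has such a local optimum (hill-climbing always terminates), so an unsatisfiable $\varphi$ would still yield a third NAE-unflippable assignment and the reduction would answer ``yes'' incorrectly. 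The tension you flag between (ii) and (iii) is therefore not a matter of tuning multiplicities; it is unresolvable for this source problem, because no gadget that leaves the $\varphi$-landscape exposed can make single flips certify global NAE-satisfiability.

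The paper avoids this by reducing from \textsc{Maximal Independent Set Extension} instead: decide whether $G$ has a maximal independent set $D$ with $D \cap X = \emptyset$. The point is that each feasibility condition of that problem is violated in a way that a \emph{single} flip can profitably repair: an edge inside $D$ is fixed by dropping one endpoint (a per-edge auxiliary variable $s_e$ and tuned multiplicities make this a strict net gain), an undominated vertex is fixed by adding it (a strict gain because none of its neighbors is in $D$), and a vertex of $D\cap X$ triggers an improving flip of a special variable $x^*$. So in the non-canonical configuration, NAE-unflippable assignments correspond exactly to feasible solutions of the source problem, with no spurious local optima. If you want to salvage your write-up, replace \textsc{NAE-3SAT} by a source problem of this kind (or adopt the paper's) and design the gadget so that each violated constraint of the source instance, not merely a suboptimal objective value, produces an improving flip.
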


The proof of \Cref{thm:sat:naesat} is also done by performing a polynomial-time reduction from \textsc{Maximal Independent Set Extension}.
Recall that, in \textsc{Maximal Independent Set Extension}, we are given a graph $G$ with vertex set $X$ and asked whether $G$ has a maximal independent set $D$ such that $D \cap X = \emptyset$.
Let us note that the isolated vertices in $V(G) \setminus X$ are contained in any maximal independent set $D$. 
By adding sufficiently many isolated vertices to $G$, we can assume that $|D| \ge |V(G) \setminus D|$ holds for every maximal independent set $D$ with $D \cap X = \emptyset$.

We construct a 3-CNF formula $\varphi$ as follows.
Let $m$ be the number of edges in $G$.
For each vertex $v \in V$, we associate a variable $x_v$, where the intention is that $x_v$ being true indicates that vertex~$v$ is included in an independent set.
We also use two additional variables $x^*$ and $y^*$, and a variable $s_e$ associated with each edge $e \in E(G)$.
For each edge $e = \{u, v\} \in E(G)$, we add clauses $(x^* \lor y^* \lor s_e)^2$, $(\neg s_e \lor \neg x_u \lor \neg x_v)^2$, and $(s_e \lor x^*)^3$ to $\varphi$.
Here, for a clause $c$ and a positive integer $t$, $c^t$ indicates the conjunction of $t$ copies of $c$.
For each vertex $v \in V(G)$, we add a clause $(x_v \lor y^*)$ and for each $v \in X$, 
add clauses $(x_v \lor \neg x^* \lor y^*)^{4m}$ and $(x^* \lor y^*)^{4m|X|}$.
The entire formula $\varphi$ is defined as
\begin{align*}
    \varphi &= \bigwedge_{e \in E(G)} \varphi_e \land \bigwedge_{v \in V(G)} \varphi_v \land \bigwedge_{v \in X} \varphi'_v \land (x^* \lor y^*)^{4m|X|};\\
    \varphi_e &= (x^* \lor y^* \lor s_e)^2 \land (\neg s_e \lor \neg x_u \lor \neg x_v)^2 \land (s_e \lor x^*)^3;\\
    \varphi_v &= (x_v \lor y^*);\\
    \varphi'_v &= (x_v \lor \neg x^* \lor y^*)^{4m},
\end{align*}
where $e = \{u, v\}$ in the second subformula $\varphi_e$.

There are two types of truth assignments $\alpha$ with (1) $\alpha(x^*) \neq \alpha(y^*)$ and (2) $\alpha(x^*) = \alpha(y^*)$.
We first show that there are only two NAE-unflippable assignments of type~(1).

\begin{lemma}\label{lem:naesat:type1}
    There are exactly two NAE-unflippable assignments $\alpha$ of $\varphi$ with $\alpha(x^*) \neq \alpha(y^*)$.
    Moreover, these two assignments are complements of each other.
\end{lemma}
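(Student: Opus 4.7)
The plan is to exhibit two NAE-unflippable assignments, verify they are complements of each other, and then show via a careful case analysis on single-variable flips that no other NAE-unflippable assignment with $\alpha(x^*) \ne \alpha(y^*)$ exists. Because NAE-satisfaction is preserved under taking complements, $\alpha$ is NAE-unflippable if and only if $\overline{\alpha}$ is, so it suffices to treat the subcase $\alpha(x^*) = \True,\ \alpha(y^*) = \False$ and obtain the other subcase by complementation.

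For existence, consider the canonical assignment $\alpha^*$ defined by $\alpha^*(x^*) = \True$, $\alpha^*(y^*) = \False$, $\alpha^*(x_v) = \True$ for all $v \in V(G)$, and $\alpha^*(s_e) = \False$ for all $e \in E(G)$. A direct inspection of each clause type of $\varphi$ shows that every clause is NAE-satisfied under $\alpha^*$; in particular $\alpha^*$ is globally NAE-maximum, so both $\alpha^*$ and $\overline{\alpha^*}$ are NAE-unflippable.

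For uniqueness, I would show that any NAE-unflippable $\alpha$ with $\alpha(x^*) = \True$ and $\alpha(y^*) = \False$ coincides with $\alpha^*$, by deducing the values of the remaining variables in three stages. Stage~1 establishes $\alpha(x_v) = \True$ for every $v \in X$: if not, flipping $x_v$ makes the single clause $\varphi_v$ and all $4m$ copies of $\varphi'_v$ NAE-satisfied (gain at least $4m+1$), while a short case analysis on the $2\deg_G(v)$ edge clauses $(\neg s_e \lor \neg x_u \lor \neg x_v)^2$ incident to $v$ bounds the loss by $2\deg_G(v) \le 2m$, yielding net change at least $2m+1 > 0$. Stage~2 establishes $\alpha(s_e) = \False$ for every $e$: if not, flipping $s_e$ makes the three copies of $(s_e \lor x^*)^3$ NAE-satisfied (gain $3$); the two copies of $(x^* \lor y^* \lor s_e)^2$ remain NAE-satisfied since $x^*$ and $y^*$ already supply both truth values; and the two copies of $(\neg s_e \lor \neg x_u \lor \neg x_v)^2$ lose at most $2$ by a case check, giving net change at least $1 > 0$. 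Stage~3 establishes $\alpha(x_v) = \True$ for every $v \in Y \coloneqq V(G) \setminus X$: with $\alpha(s_e) = \False$ in force, flipping such an $x_v$ makes $\varphi_v$ NAE-satisfied (gain $1$), and each edge clause $(\True \lor \neg x_u \lor \neg x_v)^2$ incident to $v$ already contains both truth values after the flip and so remains NAE-satisfied, giving net change at least $1 > 0$.

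The main obstacle is ensuring that the clause multiplicities in the construction are calibrated so that every single-variable flip at a non-canonical assignment yields a strict net gain. The two tight counts are the $4m$ copies of $\varphi'_v$ in Stage~1, which dominate the contribution from $v$'s incident edge clauses since $\deg_G(v) \le m$, and the three copies of $(s_e \lor x^*)$ in Stage~2, which dominate the two copies of $(\neg s_e \lor \neg x_u \lor \neg x_v)$; once these two bounds are in hand, the remaining stages follow from routine case analysis and the three stages together force $\alpha = \alpha^*$.
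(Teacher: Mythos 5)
Your proof is correct and follows essentially the same approach as the paper's: fix the subcase $\alpha(x^*)=\True$, $\alpha(y^*)=\False$, show that every remaining variable's value is forced by a single-flip gain/loss count, and appeal to complementation symmetry for the other subcase. The only differences are cosmetic---you reorder the stages (the paper forces $s_e=\False$ first and then all $x_v$ at once) and you make explicit the existence check that the canonical assignment NAE-satisfies every clause, which the paper leaves implicit.
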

\begin{proof}
    Suppose that $\alpha(x^*) = \True$ and $\alpha(y^*) = \False$.
    We first observe that $\alpha(s_e) = \False$ for all $e \in E(G)$.
    To see this, suppose that $\alpha(s_e) = \True$ for some $e \in E(G)$.
    Since clauses $(s_e \lor x^*)^3$ are not NAE-satisfied under $\alpha$ and clauses $(x^* \lor y^* \lor s_e)$ are still NAE-satisfied under $\alpha_{s_e}$, we have $\naesat(\alpha_{s_e}) - \naesat(\alpha) \ge 1$, meaning that $\alpha$ is NAE-flippable.
    Thus, $\alpha(s_e) = \False$ for all $e \in E(G)$.
    Moreover, for $v \in V(G)$ with $\alpha(x_v) = \False$, flipping the assignment of $x_v$ increases NAE-satisfied clauses by at least $4m - 2m + 1 = 2m + 1 \ge 1$, as $\varphi'_v$ becomes NAE-satisfied under $\alpha_{x_v}$.
    Hence, $\alpha(x_v) = \True$ for all $v$, and there is a unique NAE-unflippable assignment $\alpha$ with $\alpha(x^*) = \True$ and $\alpha(y^*) = \False$.
    Considering the symmetric case $\alpha(x^*) = \False$ and $\alpha(y^*) = \True$, the lemma holds.
\end{proof}

Now, we show that $G$ has a maximal independent set $D$ avoiding $X$ if and only if $\varphi$ has an NAE-unflippable assignment $\alpha$ with $\alpha(x^*) = \alpha(y^*)$.

\begin{lemma}\label{lem:nae:type2}
    There is a maximal independent set $D$ with $D \cap X = \emptyset$ in $G$ if and only if there is an NAE-unflippable assignment $\alpha$ with $\alpha(x^*) = \alpha(y^*)$ for $\varphi$.
\end{lemma}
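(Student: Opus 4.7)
My plan is to establish both directions via a careful case analysis over single-variable flips of the constructed formula $\varphi$, exploiting the asymmetric roles of $x^*$ and $y^*$ and the large multiplicities $4m$ and $4m|X|$.

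For the forward direction, given a maximal independent set $D$ of $G$ with $D \cap X = \emptyset$, I will set $\alpha(x^*) = \alpha(y^*) = \False$, $\alpha(s_e) = \True$ for every $e \in E(G)$, and $\alpha(x_v) = \True$ iff $v \in D$, and then check NAE-unflippability flip by flip. Independence of $D$ makes every $\varphi_e$ clause NAE-satisfied, and $D \cap X = \emptyset$ makes every $\varphi'_v$ clause NAE-satisfied. Flipping $s_e$ loses $5$ NAE-satisfied clauses (from the $2+3$ copies in $\varphi_e$) and gains at most $2$; flipping $x_v$ with $v \notin D$ gains $1$ from $\varphi_v$ but loses $2|N_G(v) \cap D| \ge 2$ from $\varphi_e$ by dominance of $D$; flipping $x^*$ loses $3m$ from $(s_e \lor x^*)^3$ and $4m|X|$ from $\varphi'_v$ while gaining $4m|X|$ from $(x^* \lor y^*)^{4m|X|}$, for a net change of $-3m$. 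The only delicate flip is $y^*$, which gains $4m|X|$ from the heavy clauses; to neutralize it I pad $G$ with enough isolated vertices (all forced into $D$) so that $|D| - |V(G) \setminus D| \ge 4m|X|$, making the loss from $\varphi_v$ dominate.

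For the converse, suppose $\alpha$ is NAE-unflippable with $\alpha(x^*) = \alpha(y^*)$. Using the symmetry $\naesat(\alpha) = \naesat(\overline{\alpha})$, I may assume $\alpha(x^*) = \alpha(y^*) = \False$ and put $D \coloneqq \{v \in V(G) : \alpha(x_v) = \True\}$. Four flip arguments then pin down $D$. First, if $\alpha(s_e) = \False$ for some $e$, flipping $s_e$ gains at least $5$ from $\varphi_e$ against a loss of at most $2$, contradicting unflippability; hence $\alpha(s_e) = \True$ for all $e$. Second, flipping $x_v$ for $v \in D$ changes $\naesat$ by $-1 + 2|N_G(v) \cap D|$, which being non-positive forces $|N_G(v) \cap D| = 0$, so $D$ is independent. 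Third, flipping $x_v$ for $v \notin D$ changes $\naesat$ by $+1 - 2|N_G(v) \cap D|$, forcing $|N_G(v) \cap D| \ge 1$, so $D$ is dominating. Fourth, flipping $x^*$ changes $\naesat$ by $-3m + 4m|X \cap D|$ (clause $(s_e \lor x^*)^3$ loses $3m$; $\varphi'_v$ loses $4m$ for each $v \in X \setminus D$; the heavy clauses gain $4m|X|$), and non-positivity forces $X \cap D = \emptyset$.

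I expect the main obstacle to lie in the weight calibration: the multiplicities $3$, $4m$, and $4m|X|$ must be balanced so that the $x^*$-flip is exactly non-improving precisely when $D \cap X = \emptyset$, and the isolated-vertex padding must make the $y^*$-gain in the forward direction losable at $\varphi_v$. Once this bookkeeping is in place, each implication reduces to mechanical verification of the single-flip changes enumerated above.
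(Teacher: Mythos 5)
Your proof is correct and follows essentially the same route as the paper: the same explicit assignment ($x^*=y^*=\False$, all $s_e$ true, $x_v$ true iff $v\in D$) with a flip-by-flip audit in the forward direction, and the same four flip arguments ($s_e$, $x_v$ for $v\in D$, $x_v$ for $v\notin D$, and $x^*$) extracting independence, domination, and $D\cap X=\emptyset$ in the converse. One point in your favor: your calibration of the $y^*$-flip is more careful than the paper's, which accounts only for the $\varphi_v$ clauses and invokes the padding assumption $|D|\ge|V(G)\setminus D|$, overlooking the $+4m|X|$ gain from $(x^*\lor y^*)^{4m|X|}$ under $\alpha_{y^*}$; your strengthened padding requirement $|D|-|V(G)\setminus D|\ge 4m|X|$ (achievable with more isolated vertices, and harmless elsewhere since those vertices are forced into $D$) is what is actually needed there.
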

\begin{proof}
    Suppose that $G$ has a maximal independent set $D$ with $D \cap X = \emptyset$.
    We define a truth assignment $\alpha$ for $\varphi$ as:
    \begin{align*}
        \alpha(x) =
        \begin{cases}
        \True & x \in \{x_v : v \in D\} \cup \{s_e : e \in E(G)\}\\
        \False & \text{otherwise}
        \end{cases}.
    \end{align*}
    Note that all the clauses of the form $(x^* \lor y^* \lor s_e)$, $(\neg s_e \lor \neg x_u \lor \neg x_v)$, $(s_e \lor x^*)$, and $(x_v \lor \neg x^* \lor y^*)$ are NAE-satisfied under $\alpha$, as $D$ is an independent set of $G$.
    We show that $\alpha$ is NAE-unflippable at each variable.
    For $e \in E(G)$, when flipping the assignment of $s_e$ to $\False$, it does not increase NAE-satisfied clauses at all but makes clauses $(x^* \lor y^* \lor s_e)$ and $(s_e \lor x^*)$ NAE-unsatisfied.
    For $u \in D$, $\alpha$ is NAE-unflippable at $x_u$ since all the clauses appearing $x_u$ are NAE-satisfied under $\alpha$.
    For $u \in V(G) \setminus D$, there is a neighbor $v \in N_G(u)$ of $u$ that belongs to $D$ due to the maximality of $D$.
    If we flip the assignment of $x_u$ to $\True$, clause~$(x_u \lor y^*)$ becomes NAE-satisfied but clauses~$(\neg s_e \lor \neg x_u \lor \neg x_v)^2$ become NAE-unsatisfied and clause~$(x_u \lor \neg x^* \lor y^*)$ remains NAE-satisfied.
    Thus, flipping the assignment of $x_u$ does not increase NAE-satisfied clauses, yielding $\alpha$ is NAE-unflippable at $x_u$ for all $u \in V(G) \setminus D$.
    Finally, we conclude that $\alpha$ is NAE-unflippable at $x^*$ and $y^*$: all $4m|X|$ clauses of $(x^* \lor y^*)$ become NAE-satisfied but all $4m$ clauses of $(x_v \lor \neg x^* \lor y^*)$ become NAE-unsatisfied for $v \in X$ under $\alpha_{x^*}$;
    the increment in the number of NAE-satisfied clauses is determined by the clauses in $\bigwedge_{v\in V(G)}(x_v \lor y^*)$, which is 
    \begin{align*}
        &|\{v \in V(G) : \alpha(x_v) = \False\}| - |\{v \in V(G) : \alpha(x_v) = \True\}| = |V(G) \setminus D| - |D| \le 0
    \end{align*}
    as $|D| \ge |V(G) \setminus D|$.
    Therefore, $\alpha$ is NAE-unflippable.

    Conversely, suppose that there is an NAE-unflippable assignment $\alpha$ with $\alpha(x^*) = \alpha(y^*)$ for $\varphi$.
    Without loss of generality, we assume that $\alpha(x^*) = \alpha(y^*) = \False$ as the complement of $\alpha$ is also an NAE-unflippable assignment.
    Note that all the clauses of the form $(x_v \lor \neg x^* \lor y^*)$ are NAE-satisfied regardless of the assignment of $x_v$.
    Let $D = \{v \in V(G) : \alpha(x_v) = \True\}$.
    We then show that $D$ is a maximal independent set of $G$ with $D \cap X = \emptyset$.

    We first observe that $\alpha(s_e) = \True$ for all $e \in E(G)$.
    To see this, suppose otherwise that $\alpha(s_e) = \False$ for some $e \in E(G)$.
    As $\alpha(x^*) = \alpha(y^*)= \False$, flipping the assignment of $s_e$ would make both $(x^* \lor y^* \lor s_e)^2$ and $(s_e \lor x^*)^3$ NAE-satisfied, but $(\neg s_e \lor \neg x_u \lor \neg x_v)^2$ can be NAE-unsatisfied.
    This implies that it increases the number of NAE-satisfied clauses by at least~3, contradicting the assumption that $\alpha$ is NAE-unflippable.
    We next show that $D$ is an independent set of $G$.
    Suppose that there is an edge $e = \{u, v\}$ such that $u, v \in D$.
    Then, clauses~$(\neg s_e \lor \neg x_u \lor \neg x_v)^2$ are NAE-unsatisfied. 
    This implies that flipping the assignment $\alpha(x_u)$ to $\False$ would make these two clauses NAE-satisfied and clause~$(x_v \lor y^*)$ NAE-unsatisfied, meaning that $\alpha$ is NAE-flippable.
    To see the maximality of $D$, suppose that there is a vertex $u \in V(G) \setminus D$ such that $D \cup \{u\}$ is an independent set of $G$.
    This implies that $\alpha(x_u) = \False$.
    When flipping its assignment to $\True$, clause~$(x_v \lor y^*)$ becomes NAE-satisfied and other clauses remain unchanged as $N(u) \cap D = \emptyset$, a contradiction.
    Finally, we show that $D \cap X = \emptyset$.
    Suppose that there is $u \in D \cap X$.
    Since $\alpha(x_u) \neq \alpha(y^*)$ and $\alpha(x^*) = \alpha(y^*)$, flipping the assignment of $x^*$ would make clauses $(x^* \lor y^*)^{4m|X|}$ NAE-satisfied, clauses $(s_e \lor x^*)^3$ NAE-unsatisfied for $e \in E(G)$, and clauses $(x_v \lor \neg x^* \lor y^*)^{4m}$ NAE-unsatisfied for $v \in X \setminus D$. 
    Thus, it increases NAE-satisfied clauses at least $4m|X| - 4m(|X| - 1) - 3m > 0$, meaning that $\alpha$ is NAE-flippable.

    Therefore, $D$ is a maximal independent set of $G$ with $D \cap X = \emptyset$.
\end{proof}

By \Cref{lem:naesat:type1}, there are exactly two NAE-unflippable assignments $\alpha$ of $\varphi$ with $\alpha(x^*) \neq \alpha(y^*)$ and, by \Cref{lem:nae:type2}, there is an NAE-unflippable assignment $\alpha$ of $\varphi$ with $\alpha(x^*) = \alpha(y^*)$ if and only if $G$ has a maximal independent set $D$ with $D \cap X = \emptyset$.
Therefore, we can determine the existence of $D$ by checking whether $\varphi$ has at least three NAE-unflippable assignments.
This proves \Cref{thm:sat:naesat}.

We next show that the problem remains hard even for positive CNF formulas, that is, each clause contains only positive literals.
The proof is almost analogous to the standard reduction from \textsc{NAE3SAT} to \text{Positive NAE3SAT}.
Let $\varphi$ be a CNF formula with $m$ clauses.
We replace each negative literal $\neg x$ with a fresh variable $x'$ and add $(x \lor x')^{k}$ for sufficiently large $k$, say $k = m + 1$, for each variable $x \in V(\varphi)$.
The obtained positive CNF formula is denoted by $\varphi'$.
The following lemma ensures that the new variable $x'$ serves as the negation of $x$ for any NAE-unflippable assignment.

\begin{lemma}\label{lem:sat:posnaesat}
    For every NAE-unflippable assignment $\alpha$ for $\varphi'$, $\alpha(x) \neq \alpha(x')$ for all $x \in V(\varphi)$. 
\end{lemma}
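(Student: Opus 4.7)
The plan is to argue by contradiction: assume some $x \in V(\varphi)$ satisfies $\alpha(x) = \alpha(x')$, and show that flipping $x$ strictly increases $\naesat_{\varphi'}$, contradicting NAE-unflippability of $\alpha$.

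First I would analyze the effect on the padding clauses. Note that when $\alpha(x) = \alpha(x')$, every one of the $k = m+1$ copies of $(x \lor x')$ fails to be NAE-satisfied: either both literals are true (so the clause is satisfied but not NAE-satisfied) or both are false (so the clause is not even satisfied). After flipping the assignment of $x$, we have $\alpha_x(x) \neq \alpha_x(x')$, so all $m+1$ copies of $(x \lor x')$ become NAE-satisfied. This yields a gain of exactly $m+1$ NAE-satisfied clauses from the padding block.

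Next I would bound the collateral loss on the non-padding clauses. Flipping $x$ affects only clauses containing the variable $x$; in $\varphi'$, besides the padding, these are precisely the clauses inherited from $\varphi$ that originally contained the positive literal $x$ (clauses that contained $\neg x$ now contain $x'$ instead, and so are unaffected by flipping $x$). Since $\varphi$ has exactly $m$ clauses, there are at most $m$ such non-padding clauses, and flipping $x$ can cause each of them to change its NAE-satisfaction status by at most one; in particular, the loss in NAE-satisfied clauses from this part is at most $m$.

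Combining the two bounds gives $\naesat_{\varphi'}(\alpha_x) - \naesat_{\varphi'}(\alpha) \ge (m+1) - m = 1 > 0$, which contradicts the hypothesis that $\alpha$ is NAE-unflippable. The only place any care is required is in choosing the padding multiplicity strictly larger than the maximum possible collateral damage, which is exactly why $k$ was set to $m+1$; everything else is a direct accounting argument.
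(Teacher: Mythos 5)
Your proof is correct and follows essentially the same approach as the paper's: both derive a contradiction by noting that the $m+1$ padding copies of $(x \lor x')$ all become NAE-satisfied after a flip, while at most $m$ other clauses can be lost. The only cosmetic difference is that you flip $x$ (affecting the at most $m$ clauses that originally contained the positive literal $x$) whereas the paper flips $x'$ (affecting the at most $m$ clauses that originally contained $\neg x$); the accounting is identical.
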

\begin{proof}
    Suppose that there is $x \in V(\varphi)$ such that $\alpha(x) = \alpha(x')$.
    By flipping the assignment of $x'$, clause~$(x \lor x')$ becomes NAE-satisfied.
    Since $\varphi'$ contains $m + 1$ copies of this clause, we have $\naesat(\alpha_{x'}) - \naesat(\alpha) \ge 1$.
    This contradicts the fact that $\alpha$ is NAE-unflippable.
\end{proof}

The above lemma implies that there is a bijection between the set of NAE-unflippable assignments for $\varphi$ and those for $\varphi'$, proving the following theorem.

\begin{theorem}\label{thm:sat:posnaesat}
    It is NP-complete to determine whether an input positive 3-CNF formula has at least three NAE-unflippable assignments.
\end{theorem}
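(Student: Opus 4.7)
The plan is to reduce from Theorem~\ref{thm:sat:naesat} via the negation-elimination gadget sketched above. Starting from a 3-CNF formula $\varphi$ with $m$ clauses, I would introduce a fresh variable $x'$ for each $x \in V(\varphi)$, substitute $x'$ for every occurrence of $\neg x$ in $\varphi$, and append $m+1$ copies of the clause $(x \lor x')$ for each $x \in V(\varphi)$; call the resulting positive 3-CNF formula $\varphi'$. The construction is clearly polynomial-time and produces a positive 3-CNF formula.

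Define a map $\Phi$ from assignments of $\varphi$ to assignments of $\varphi'$ by $\Phi(\alpha)(x) = \alpha(x)$ and $\Phi(\alpha)(x') = \neg \alpha(x)$. Each transformed clause of $\varphi$ is NAE-satisfied under $\Phi(\alpha)$ if and only if the original is NAE-satisfied under $\alpha$, and all appended clauses $(x \lor x')$ are NAE-satisfied under $\Phi(\alpha)$, so
\begin{align*}
    \naesat_{\varphi'}(\Phi(\alpha)) = \naesat_\varphi(\alpha) + (m+1)\,|V(\varphi)|.
\end{align*}
Lemma~\ref{lem:sat:posnaesat} tells us every NAE-unflippable assignment of $\varphi'$ lies in the image of $\Phi$. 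I would then argue that $\Phi$ restricts to a bijection between NAE-unflippable assignments of $\varphi$ and those of $\varphi'$. The forward direction---if $\alpha$ is NAE-unflippable in $\varphi$, then $\Phi(\alpha)$ is NAE-unflippable in $\varphi'$---is verified by noting that any single-variable flip in $\varphi'$ breaks the constraint $\alpha'(x) \neq \alpha'(x')$ for some $x$, incurring a loss of $m+1$ from the appended clauses while gaining at most $m$ from the at most $m$ transformed clauses of $\varphi$ containing the flipped variable, hence a strict decrease. The backward direction follows by inspecting the displayed identity together with the pairing enforced by Lemma~\ref{lem:sat:posnaesat}, linking single flips in $\varphi$ with the coupled $\{x, x'\}$-behavior in $\varphi'$.

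Combining the bijection with Theorem~\ref{thm:sat:naesat} yields that deciding the existence of at least three NAE-unflippable assignments is NP-hard for positive 3-CNF formulas, and NP membership follows because NAE-unflippability of any given assignment can be checked in polynomial time by enumerating all single-variable flips. The main obstacle is the backward direction of the bijection: a single flip of $x$ in $\varphi$ corresponds to the coupled flip of $x$ and $x'$ in $\varphi'$ rather than to a single flip there, so transferring NAE-unflippability requires care. The weight $m+1$ on the appended clauses is tuned precisely so that NAE-unflippability in $\varphi'$ already enforces $\alpha'(x') = \neg \alpha'(x)$ by Lemma~\ref{lem:sat:posnaesat}, which is what pins the coupling down and lets the comparison go through cleanly.
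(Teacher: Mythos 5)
Your construction and your use of Lemma~\ref{lem:sat:posnaesat} match the paper's proof exactly, but the step you defer---the backward direction of the bijection---is where the argument actually breaks, and your own forward-direction computation shows why. You argue that flipping any single variable of $\varphi'$ under $\Phi(\alpha)$ loses the $m+1$ NAE-satisfied copies of some $(x \lor x')$ while gaining at most $m$ from the transformed clauses, hence strictly decreases $\naesat_{\varphi'}$. Notice that this argument never uses the hypothesis that $\alpha$ is NAE-unflippable in $\varphi$: it shows that $\Phi(\alpha)$ is NAE-unflippable in $\varphi'$ for \emph{every} assignment $\alpha$ of $\varphi$. Combined with Lemma~\ref{lem:sat:posnaesat}, this would make the NAE-unflippable assignments of $\varphi'$ exactly the $2^{|V(\varphi)|}$ assignments satisfying $\alpha'(x) \neq \alpha'(x')$ for all $x$, so $\varphi'$ would have at least three of them whenever $|V(\varphi)| \ge 2$, independently of $\varphi$. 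A map whose entire image consists of local optima cannot restrict to a bijection between the NAE-unflippable assignments of $\varphi$ and those of $\varphi'$ unless every assignment of $\varphi$ is NAE-unflippable.

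The underlying reason is precisely the obstacle you name: a single flip of $x$ in $\varphi$ corresponds to the \emph{coupled} flip of the pair $\{x, x'\}$ in $\varphi'$, whereas NAE-unflippability of $\Phi(\alpha)$ only rules out single-variable flips in $\varphi'$. Lemma~\ref{lem:sat:posnaesat} pins down which assignments of $\varphi'$ can be locally optimal, but it gives no control over the two-variable move, so an $\alpha$ that is NAE-flippable in $\varphi$ can still have $\Phi(\alpha)$ NAE-unflippable in $\varphi'$. The paper itself asserts the bijection in a single sentence after Lemma~\ref{lem:sat:posnaesat}, so you have reproduced its outline faithfully---but the outline leaves exactly this step open, and your proposal does not close it. A repair must arrange the gadget so that whenever flipping $x$ in $\varphi$ strictly improves, some \emph{single} flip in $\varphi'$ already strictly improves (for instance by rebalancing the multiplicities of the consistency clauses against the occurrence counts of $x$ and $\neg x$); as written, the multiplicity $m+1$ does the opposite, since it makes every single flip out of a consistent assignment strictly worse.
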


Now, we turn our attention to \textsc{Max Cut}.
Let $\varphi$ be a positive 3-CNF formula.
As we will perform a reduction from \textsc{Max NAE3SAT}, we can assume that $\varphi$ has no unit clauses, as they are always NAE-unsatisfied.
From $\varphi$, we construct a multigraph $G$ as follows. 
The vertex set of $G$ corresponds to $V(\varphi)$.
For each clause with two literals, say $(x \lor y)$, we add two parallel edges between $x$ and $y$, and for each clause with three literals, say $(x \lor y \lor z)$, we add edges $\{x, y\}, \{y, z\}, \{z, x\}$, forming a triangle.
An \emph{ordered} cut of $G$ is an ordered pair $(X, Y)$ such that $X \cup Y = V(G)$ and $X \cap Y = \emptyset$.
An ordered cut $(X, Y)$ is said to be stable if its unordered counterpart $\{X, Y\}$ is stable.

\begin{lemma}\label{lem:stablecut-bij}
    There is a bijection between the set of NAE-unflippable assignments for $\varphi$ and the set of ordered stable cuts of $G$.
\end{lemma}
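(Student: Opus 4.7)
The plan is to exhibit the natural bijection between truth assignments and ordered cuts, then argue that the single-variable flip operations correspond on both sides via a clean counting identity, so that NAE-unflippability translates directly to stability.

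First, I would define the map $\Phi$ that sends a truth assignment $\alpha$ for $\varphi$ to the ordered cut $(X_\alpha, Y_\alpha)$ with $X_\alpha = \{x \in V(\varphi) : \alpha(x) = \True\}$ and $Y_\alpha = V(G) \setminus X_\alpha$. Since $V(G) = V(\varphi)$, this is obviously a bijection at the level of sets; the task is to show it restricts to a bijection between NAE-unflippable assignments and ordered stable cuts.

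The key step is the counting identity $|E(X_\alpha, Y_\alpha)| = 2 \cdot \naesat_\varphi(\alpha)$. I would prove it by summing contributions clause by clause. For a 2-literal clause $(x \lor y)$, the formula contains two parallel edges $\{x,y\}$ in $G$; both are cut iff $\alpha(x) \neq \alpha(y)$, which is precisely the NAE-satisfaction condition, giving a contribution of $2$ if NAE-satisfied and $0$ otherwise. For a 3-literal clause $(x \lor y \lor z)$, the corresponding triangle in $G$ has $0$ cut edges when all three vertices lie on the same side of the cut (exactly the NAE-unsatisfied case) and exactly $2$ cut edges when the three variables split $2{-}1$ (exactly the NAE-satisfied case). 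Summing, every NAE-satisfied clause contributes $2$ and every NAE-unsatisfied clause contributes $0$, yielding the claim.

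Next, I would observe that flipping a variable $v$ in $\alpha$ corresponds bijectively to moving $v$ across the cut: $\Phi(\alpha_v) = (X_\alpha \symdif \{v\}, Y_\alpha \symdif \{v\})$. Applying the identity twice then gives
\begin{equation*}
|E(X_\alpha \symdif \{v\}, Y_\alpha \symdif \{v\})| - |E(X_\alpha, Y_\alpha)| = 2\bigl(\naesat_\varphi(\alpha_v) - \naesat_\varphi(\alpha)\bigr).
\end{equation*}
Hence there exists $v$ with $\naesat_\varphi(\alpha_v) > \naesat_\varphi(\alpha)$ if and only if the ordered cut $(X_\alpha, Y_\alpha)$ is improvable by flipping the side of some vertex. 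Equivalently, $\alpha$ is NAE-unflippable if and only if $\Phi(\alpha)$ is an ordered stable cut. Combined with the bijectivity of $\Phi$ on assignments, this yields the claimed bijection.

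There is no serious obstacle here; the only place where one must be careful is the triangle analysis for the 3-literal case, where it is essential that both the "all equal" partition (contributing $0$ cut edges, NAE-unsatisfied) and the "$2{-}1$" partition (contributing exactly $2$ cut edges, NAE-satisfied) line up numerically with the 2-literal contribution of $2$. That matching is precisely what makes the identity clause-independent and the bijection go through.
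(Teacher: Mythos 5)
Your proposal is correct and follows essentially the same route as the paper's proof: the natural true/false partition map together with the observation that each NAE-satisfied clause contributes exactly $2$ cut edges (and each NAE-unsatisfied clause contributes $0$), so that flipping a variable changes the cut size by twice the change in $\naesat$. Your clause-by-clause verification of the identity for binary clauses (parallel edges) and ternary clauses (triangles) just makes explicit the counting the paper states in one line.
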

\begin{proof}
    From a truth assignment $\alpha$ of $\varphi$, we can naturally define an ordered cut $(X, Y)$ as $X = \{x \in V(\varphi) : \alpha(x) = \True\}$ and $Y = \{x \in V(\varphi) : \alpha(x) = \False\}$.
    Conversely, we can define a truth assignment of $\varphi$ from an ordered cut of $G$ in the same way as above.
    Note that the complement of $\alpha$ defines the ordered cut $(Y, X)$ and vice-versa.
    Now, we show that $\alpha$ is NAE-unflippable if and only if $(X, Y)$ is stable.

    Suppose that $\alpha$ is NAE-unflippable.
    Observe that when a clause $c$ is NAE-satisfied, the variables contained in $c$ contributes exactly $2$ to the cut $(X, Y)$.
    Thus, for $x \in X \cup Y$, 
    \begin{align*}
        |E(X, Y)| - |E(X \symdif \{x\}, Y \symdif \{x\})| = 2\cdot\naesat(\alpha) - 2\cdot\naesat(\alpha_x) \ge 0,
    \end{align*}
    implying that $(X, Y)$ is stable.
    This implication is reversible, proving the claim of the lemma.
\end{proof}

It is easy to see that $G$ has at least three ordered stable cuts if and only if it has at least two (unordered) stable cuts.
By~\Cref{thm:sat:posnaesat} and \Cref{lem:stablecut-bij}, the following theorem holds. 

\begin{theorem}\label{thm:maxcut}
    It is NP-complete to determine whether an input multigraph $G$ has at least two stable cuts.
\end{theorem}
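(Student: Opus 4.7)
}
The plan is to combine the positive NAE3SAT hardness (\Cref{thm:sat:posnaesat}) with the ordered-cut bijection of \Cref{lem:stablecut-bij}, handling a parity issue to translate ``at least three NAE-unflippable assignments'' into ``at least two unordered stable cuts.'' Membership in NP is straightforward: given a candidate pair of distinct cuts $\{X_1,Y_1\}$ and $\{X_2,Y_2\}$, one checks in polynomial time, for each vertex $v$, whether flipping $v$ strictly increases the cut size, which verifies stability.

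For NP-hardness, I would start from an instance $\varphi$ of positive \textsc{3-NAESAT} produced by \Cref{thm:sat:posnaesat}. As noted just before the construction of $G$, we may assume $\varphi$ has no unit clauses, since unit clauses are never NAE-satisfied and can be removed without changing the set of NAE-unflippable assignments. Apply the construction preceding \Cref{lem:stablecut-bij} to obtain the multigraph $G$; this is clearly polynomial-time.

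The key combinatorial step is a parity observation. If $\alpha$ is NAE-unflippable for $\varphi$, then so is its complement $\overline{\alpha}$, because $\naesat_\varphi(\beta)=\naesat_\varphi(\overline{\beta})$ for every $\beta$ and flipping a single variable commutes with complementation. Since $V(\varphi) \neq \emptyset$, we always have $\alpha \neq \overline{\alpha}$, so the NAE-unflippable assignments of $\varphi$ partition into complementary pairs and their number is even. In particular, $\varphi$ has at least three NAE-unflippable assignments if and only if it has at least four. Analogously, the ordered stable cuts of $G$ come in pairs $(X,Y), (Y,X)$ with $X \neq Y$, so $G$ has at least four ordered stable cuts if and only if it has at least two unordered stable cuts.

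Putting these together with the bijection of \Cref{lem:stablecut-bij}:
\begin{align*}
  \varphi \text{ has } \geq 3 \text{ NAE-unflippable assignments}
  &\iff \varphi \text{ has } \geq 4 \text{ NAE-unflippable assignments}\\
  &\iff G \text{ has } \geq 4 \text{ ordered stable cuts}\\
  &\iff G \text{ has } \geq 2 \text{ stable cuts.}
\end{align*}
Together with \Cref{thm:sat:posnaesat} this yields NP-hardness, and combined with membership in NP it yields NP-completeness.

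The only subtle point, and thus the main thing to be careful about, is the parity bookkeeping: one must not confuse the ordered and unordered counts, and must justify that complementary NAE-unflippable assignments are always distinct (which uses nonemptiness of $V(\varphi)$, inherited from the source instance of \textsc{Maximal Independent Set Extension}). Everything else is a direct application of the results already proved.
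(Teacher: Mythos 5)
Your proposal is correct and follows essentially the same route as the paper: reduce from positive \textsc{NAE3SAT} (\Cref{thm:sat:posnaesat}) via the ordered-cut bijection of \Cref{lem:stablecut-bij}, and translate ``at least three NAE-unflippable assignments'' into ``at least two unordered stable cuts'' by the pairing of complementary assignments/ordered cuts. The paper compresses this last step into the remark that $G$ has at least three ordered stable cuts iff it has at least two unordered ones; your explicit parity bookkeeping just spells out the same counting.
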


Finally, we consider \textsc{Max 2SAT}.
The proof is almost the same as a standard reduction from \textsc{Max Cut} to \textsc{Max 2SAT}.

\begin{theorem}\label{thm:maxsat}
    It is NP-complete to determine whether an input 2-CNF formula has at least two unflippable assignments.
\end{theorem}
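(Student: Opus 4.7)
My plan is to reduce from the NP-complete problem of determining whether a given multigraph has at least two stable cuts (\Cref{thm:maxcut}). Given a multigraph $G$, I would apply the standard reduction from \textsc{Max Cut} to \textsc{Max 2SAT}: introduce a variable $x_v$ for each $v \in V(G)$, and for each edge $\{u,v\} \in E(G)$, add the pair of clauses $(x_u \lor x_v)$ and $(\neg x_u \lor \neg x_v)$.

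For any truth assignment $\alpha$, write $X_\alpha = \{v : \alpha(x_v) = \True\}$ and $Y_\alpha = V(G) \setminus X_\alpha$; an edge $\{u,v\}$ contributes $2$ to $\sat(\alpha)$ when $\alpha(x_u) \neq \alpha(x_v)$ and $1$ otherwise, whence $\sat(\alpha) = |E(G)| + |E(X_\alpha, Y_\alpha)|$. Consequently, $\sat(\alpha_{x_v}) - \sat(\alpha)$ equals the change in cut size upon moving $v$ to the opposite side, and so $\alpha$ is unflippable iff the cut $\{X_\alpha, Y_\alpha\}$ is stable in $G$. This is the direct 2-CNF analogue of \Cref{lem:stablecut-bij} and produces a bijection between unflippable assignments of $\varphi$ and ordered stable cuts of $G$.

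Each unordered stable cut of $G$ corresponds to two ordered stable cuts and hence two unflippable assignments ($\alpha$ and $\overline{\alpha}$). To sharpen this correspondence to ``at least two unflippable iff at least two stable cuts,'' I would break the complementation symmetry by appending a small gadget to $\varphi$: for a distinguished vertex $v^* \in V(G)$, add a carefully chosen set of unit clauses on $x_{v^*}$ (preceded, if necessary, by replacing $G$ with its edge-doubled copy $G_2$ to exploit a parity constraint), designed to force $\alpha(x_{v^*}) = \True$ in every unflippable assignment while preserving the natural stability condition at every other variable. If the gadget is correctly tuned, each unordered stable cut of $G$ contributes exactly one unflippable assignment of the augmented formula, completing the reduction.

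The main obstacle is establishing this exact correspondence once the symmetry-breaker is added. The pinning clauses risk making $v^*$ ``artificially stable,'' so that cuts that are stable at every $v \neq v^*$ but not at $v^*$ become unflippable, which would inflate the count beyond that of stable cuts of $G$. Excluding such spurious assignments requires a careful parity (or integrality) argument: after edge doubling, the $\sat$-change at each variable is an even integer, so the $\pm 1$ shift contributed by a single unit clause $(x_{v^*})$ can be absorbed by integrality of the stability inequality, aligning the artificial stability at $v^*$ with the natural one when $\alpha(x_{v^*}) = \True$ and rendering $\overline{\alpha}$ strictly flippable when $\alpha(x_{v^*}) = \False$. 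Verifying this alignment variable by variable---and in particular confirming that exactly one element of each complementary pair survives---is the step that demands the most care.
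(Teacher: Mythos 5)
Your overall route is the same as the paper's: reduce from the two-stable-cuts problem of \Cref{thm:maxcut} via the standard encoding of each edge $\{u,v\}$ as the clause pair $(u \lor v) \land (\neg u \lor \neg v)$, note that $\sat(\alpha) = |E(G)| + |E(X_\alpha, Y_\alpha)|$ so that single flips track single vertex moves, and break the $\alpha \leftrightarrow \overline{\alpha}$ symmetry by pinning a distinguished vertex $v^*$ with unit clauses (the paper uses $(v^*)^{2|E(G)|+1}$, which forces $\alpha(v^*) = \True$ in every unflippable assignment and thereby kills complements).

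The gap is that the step you yourself flag as ``the step that demands the most care'' is exactly the step that is missing, and the gadget you sketch does not close it. With all edges doubled and a single unit clause $(x_{v^*})$, flipping $x_{v^*}$ changes $\sat$ by $2\Delta - 1$ (from $\True$ to $\False$) or $2\Delta + 1$ (from $\False$ to $\True$), where $\Delta$ is the change in cut value upon moving $v^*$. So an assignment with $\alpha(x_{v^*}) = \False$ remains unflippable at $x_{v^*}$ whenever $\Delta \le -1$; your claim that $\overline{\alpha}$ is ``rendered strictly flippable'' holds only when $\Delta \ge 0$. Consequently a stable cut in which moving $v^*$ strictly decreases the cut value contributes \emph{both} $\alpha$ and $\overline{\alpha}$, and already the path on three vertices --- which has a unique stable cut, namely its maximum cut, from which every vertex move is strictly harmful --- is a NO-instance mapped to a formula with two unflippable assignments. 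The opposite failure mode you identify is equally real: if one pins hard enough to force $\alpha(x_{v^*}) = \True$ (as the paper does), then cuts that are stable at every $u \ne v^*$ but improvable at $v^*$ yield spurious unflippable assignments, and such cuts exist (on the same three-vertex path with $v^*$ an endpoint $a$, the cut $\{\{a,b\},\{c\}\}$ is improvable only at $a$, and the corresponding assignment is unflippable under the paper's formula; in $C_5$ there is likewise a cut improvable at exactly one vertex). Note that the paper's own write-up only verifies the flip inequality at $u \ne v^*$ and never checks stability of the cut at $v^*$ itself, so the difficulty you isolate is genuine and not fully dispatched there either; but a complete proof must resolve it --- e.g., by constraining the choice of $v^*$ or by exploiting the structure of the multigraphs produced by \Cref{lem:stablecut-bij} --- and your proposal, by its own admission, does not yet do so.
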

\begin{proof}
    We perform a reduction from the local search version of \textsc{Max Cut} to the local search version of \text{Max 2SAT}.
    Let $G$ be a multigraph.
    We choose an arbitrary vertex and denote it by $v^*$.
    We define a 2-CNF formula $\varphi$ with $V(\varphi) = V(G)$ as:
    \begin{align*}
        \bigwedge_{\{u, v\} \in E(G)} \left( (u \lor v) \land (\neg u \lor \neg v) \right) \land (v^*)^{2|E(G)| + 1}.
    \end{align*}
    Note that for a truth assignment $\alpha$ of $\varphi$ and for an edge $\{u, v\} \in E(G)$, either the two clauses corresponding to it are both satisfied (when $\alpha(u) \neq \alpha(v)$) or exactly one of them is satisfied (when $\alpha(u) = \alpha(v)$).
    
    Let $\{X, Y\}$ be a cut of $G$ with $v^* \in X$.
    Then, define a truth assignment of $\varphi$ as
    \begin{align*}
        \alpha(u) = \begin{cases}
            \True & u \in X\\
            \False & u \in Y
        \end{cases}
    \end{align*}
    for each $u \in V(G) \setminus \{v^*\}$.
    From a truth assignment of $\varphi$, we can define a cut in the same way as above.

    Let $\alpha$ be an unflippable assignment of $\varphi$.
    Then, it follows that $\alpha(v^*) = \True$ as $\varphi$ contains $2|E(G)| + 1$ copies of unit clause~$(v^*)$.
    For $u \in V(\varphi) \setminus \{v^*\}$, we have
    \begin{align*}
        \sat(\alpha) - \sat(\alpha_u) &= 2|E(X, Y)| + |E(G) \setminus E(X, Y)|\\
        &\hspace{1cm} - 2|E(X \symdif \{u\}, Y \symdif \{u\})| - |E(G) \setminus E(X \symdif \{u\}, Y \symdif \{u\})|\\
        &= |E(X, Y)| + |E(G)| - |E(X \symdif \{u\}, Y \symdif \{u\})| - |E(G)|\\
        &= |E(X, Y)| - |E(X \symdif \{u\}, Y \symdif \{u\})|.
    \end{align*}
    As $\sat(\alpha) \ge \sat(\alpha_u)$, the cut $\{X, Y\}$ is stable.

    Conversely, from a stable cut $\{X, Y\}$ with $v^* \in X$, we can conclude that the truth assignment $\alpha$ is unflippable, by considering the above equalities in the reverse direction.
\end{proof}

\section{Concluding remarks}
In this paper, we initiated the study of the complexity of finding multiple local optima in unweighted combinatorial optimization problems.
Our results suggest that there are several natural local search problems for which one of the local optima is easy to find, but two (or three) of them are hard to find, and give rise to several interesting future directions.
\begin{itemize}
    \item In \Cref{thm:maxcut} and \Cref{thm:maxsat}, the constructed graphs and 2-CNF formulas have (unweighted) multiedges and multiset of clauses, respectively.
    Since these multiple objects can be encoded by a polynomially-weighted single object, these results also hold for simple polynomially-weighted graphs and formulas.
    It would be interesting to investigate that the problems are still hard even for unweighted simple graphs and formulas.
    
    \item The primal goal of this paper is to establish that, even on very simple neighborhood structures, finding multiple local optima is computationally intractable in natural combinatorial optimization problems.
    To extend our results, there are various local search problems with other neighborhood structures studied in the context of PLS-completeness.
    We believe that these local search problems are also intractable in our setting, while it would be interesting to investigate natural local search problems that are PLS-complete but tractable in our setting.
    
    \item We show that finding two $k$-maximal independents can be solved in polynomial time when the input graph is restricted to the class of line graphs (\Cref{thm:k-MM}) or to that of bounded-cliquewidth graphs (\Cref{prop:cw}).
    However, these results might be of no importance when it comes to finding a global optimum for \textsc{Independent Set}, as we can find a global one in polynomial time on these classes.
    A natural question is to investigate a class of graphs on which \textsc{Independent Set} is NP-hard but where finding multiple 2-maximal independent sets is easy, while it is also an interesting question whether the opposite situation can occur: finding a global optimum is easy, but finding multiple local optima is hard.
\end{itemize}

\printbibliography

\end{document}